\def\BibTeX{{\rm B\kern-.05em{\sc i\kern-.025em b}\kern-.08em
    T\kern-.1667em\lower.7ex\hbox{E}\kern-.125emX}}
\newtheorem{theorem}{Theorem}
\newtheorem{proposition}{Proposition}
\newtheorem{remark}{Remark}
\newtheorem{definition}{Definition}
\tikzset{
  treenode/.style = {shape=rectangle, rounded corners,
                     draw, align=center,
                     top color=white, bottom color=blue!20},
  root/.style     = {treenode, font=\Large, bottom color=red!30},
  env/.style      = {treenode, font=\ttfamily\normalsize},
  dummy/.style    = {circle,draw}
}
\newcommand{\corb}[1]{\textcolor{black}{#1}}
\newcommand{\lcol}[1]{\textbf{\textcolor{black}{#1}}}
\newcommand{\bbR}{\mathbb{R}}
\newcommand{\bbN}{\mathbb{N}}
\newcommand{\bbP}{\mathbb{P}}
\newcommand{\bbC}{\mathbb{C}}
\newcommand{\bx}{\mathbf{x}}
\newcommand{\bv}{\mathbf{v}}
\newcommand{\bz}{\mathbf{z}}
\newcommand{\bs}{\mathbf{s}}
\newcommand{\blambda}{\boldsymbol{\lambda}}
\newcommand{\balpha}{\boldsymbol{\alpha}}
\newcommand{\E}{\mbox{E}}
\newcommand{\esets}[1]{{\mathbb E} [ #1 ] }
\newcommand{\mcA}{{\mathcal A}}
\newcommand{\mcB}{{\mathcal B}}
\newcommand{\mcC}{{\mathcal C}}
\newcommand{\mcD}{{\mathcal D}}
\newcommand{\mcF}{\mathcal{F}}
\newcommand{\mcN}{{\mathcal N}}
\newcommand{\pset}{{\mathbb P}}
\newcommand{\eset}[1]{{\mathbb E} \left[ #1 \right] }
\newcommand{\id}{\,\mbox{d}}
\def\R{\Bbb{R}}
\theoremstyle{remark}
\newcommand{\argmin}{\operatornamewithlimits{argmin}}
\newcommand{\arginf}{\operatornamewithlimits{arginf}}
\definecolor{darkgreen}{RGB}{25,51,0}
\definecolor{bloodred}{RGB}{101,28,50}
\tikzstyle{block} = [draw,rounded corners=1ex,fill=blue!20,minimum width=2em]
\tikzstyle{emptyblock} = [draw,minimum width=2em]
\tikzstyle{branch}=[fill,shape=circle,minimum size=3pt,inner sep=0pt]
\tikzstyle{vecArrow} = [thick, blue,
\begin{document}

\title{Uncertainty quantification of receptor ligand binding sites prediction}

\author{Nanjie Chen, Dongliang Yu, Dmitri Beglov, Mark Kon, Julio Enrique Castrill\'on-Cand\'as 
\thanks{N.Chen is with Graduate School
of Mathematics and Statistics, Boston University, Boston, USA}
\thanks{D.Yu is with Graduate School
of Mathematics, Stony Brook University, Stony Brook, USA}
\thanks{D.Beglov is with the Faculty of Biomedical Engineering, Boston University, Boston, USA.}
\thanks{M.Kon is with the Faculty
 of Mathematics and Statistics, Boston University, Boston, USA}

\thanks{J.Castrill\'on-Cand\'as  is with the Faculty
of Mathematics and Statistics, Boston University, Boston, USA}
}
\IEEEtitleabstractindextext{%
\begin{abstract}
Recent advances in protein docking site prediction have highlighted the limitations of traditional rigid docking algorithms, such as PIPER, which often neglect critical stochastic elements such as solvent-induced fluctuations. These oversights can lead to inaccuracies in identifying optimal docking sites. To address this issue, this work introduces a novel model in which molecular shapes of ligand and receptor are represented using multivariate Karhunen-Loeve (KL) expansions. This method effectively captures the stochastic nature of energy manifolds, allowing for more accurate representations of molecular interactions. Developed as a plug-in for PIPER, our scientific computing software enhances the platform, delivering uncertainty measures for the energy manifolds of ranked binding sites. Our results demonstrate that top-ranked binding sites, characterized by lower uncertainty in the stochastic energy manifold, align closely with actual docking sites. Conversely, sites with higher uncertainty correlate with less optimal docking positions. This distinction not only validates our approach but also sets a new standard in protein docking predictions, offering valuable implications for future molecular interaction research and drug development.
\end{abstract}

\begin{IEEEkeywords}
Karhunen-Loeve Expansion, Proper Orthogonal Decomposition, Stochastic Modeling, Uncertainty Quantification
\end{IEEEkeywords}
}

\maketitle


\section{Introduction}
\IEEEPARstart{T}{he} exploration of protein docking and broader molecular interactions has become a significant focus within the field of biology. 
Cfiberbundle@yeah.net.omputational approaches have already demonstrated success in identifying potential compounds for the treatment of novel diseases, thereby expediting the drug design process. This is especially relevant and vital considering the recent global health crises triggered by pandemics.\par
A popular computational approach for predicting receptor-ligand binding sites is known as rigid body docking, where the molecular shapes are assumed to be fixed (Piper \cite{PIPER2006}, ZDock \cite{Zdock2014}, FFT , FMFT \cite{Padhorny2016}, etc). 
While this method is computationally efficient, rigid docking can fail to accurately predict the binding site if:
i) Either the shape of the receptor or ligand varies significantly during the binding process.
ii)  The conformational shape of the
receptor or ligand is uncertain due to the kinetic presence of the 
solvent atoms.
To address the latter, the flexible Docking approach was developed \cite{bindingML}.
In contrast to rigid docking, where both receptor and ligand are treated as rigid bodies, flexible docking acknowledges the potential conformational molecular changes that can occur during the binding process. While this approach can yield a more accurate prediction, it simultaneously intensifies computational complexity. And the accuracy of flexible docking is greatly dependent on modeling of the receptor and ligand's 
flexibilities, as well as the initial state or conformation of the molecules \cite{totrov2008flexible}\cite{Huang2010}. Further, the flexible docking method tackles conformational changes by  integrating molecular flexibility in a deterministic manner. \cite{FlexibleDocking}. 

However, to address ii), in considering thermal random fluctuations that originate from the solvent, it becomes more practical to incorporate conformational uncertainty as random fields. This necessitates the development of a docking method that more realistically models receptor-ligand binding sites under conformational uncertainty, balancing both accuracy and efficiency. In this paper, we introduce a stochastic framework for assessing the inherent  uncertainty in rigid docking. Our future work will strive to extend this methodology to encompass other docking methods.

Computational receptor-ligand interactions (Docking calculations) involve two methodological selections. The first involves a  goodness of fit, sometimes called a scoring function, which assigns a numerical quality measure to each configuration of the two bodies. This in turn defines an energy manifold parametrized by the molecular spatial degrees of freedom, on which the optimal binding site is sought. The second selection involves the choice of search algorithm on the energy manifold. Both of these choices are based on some assumed molecular model. The partially heuristic scoring function incorporates various aspects of molecular properties, including electron density representations of the molecular shape, electrostatic (see \lcol{Figure} \ref{figure:trypsin} for an example
of potential fields for the Trypsin protein) and solvation terms \cite{Gabb1997,ZDOCK2003}, and structure-based interaction potentials \cite{PIPER2006, Mintseris2007}.

One popular approach to rigid docking is based on the Fast Fourier Transform (FFT). The mathematical formulation of this method is as follows.
Let $\mathbf{x} := [x,y,z]^T \in \mathbb{R}^3$ be spatial coordinates, $ \bm{\alpha} := [\alpha ,\beta, \lambda]^T \in \mathbb{R}^3$ be the rotational coordinates$, \text{and } \bm{\lambda} := [\lambda,\mu,\nu]^T \in \mathbb{R}^3$ be the translational coordinates. For $p = 1,...,P$, let $L_p(\bx) : \mathbb{R}^3 \to \mathbb{C}$, $R_p(\bx) : \mathbb{R}^3 \to \mathbb{C}$ be respectively the different ligand and receptor molecular property maps. The scoring energy function of the receptor-ligand interaction is given by:
$$
\mathrm{E}(\boldsymbol{\alpha}, \boldsymbol{\lambda})
:=\sum_{k=1}^{P} \int_{\mathbb{R}^{3}} \overline{R_{k}(\bx)}\left(T(\boldsymbol{\lambda}) D(\boldsymbol{\alpha}) L_{k}(\bx)\right) \mathrm{d} \bx
$$
where $T:\mathbb{R}^3 \to \mathbb{R}^3$ is the translation operator,  $D:\mathbb{R}^3 \to \mathbb{R}^3$ is the rotation operator. The goal is to attain the optimal docking site by searching the six dimensional function $\mathrm{E}(\boldsymbol{\alpha}, \boldsymbol{\lambda})$ for the minimal energy, that is, finding the docking site $(\boldsymbol{\alpha}_{bind}, \boldsymbol{\lambda}_{bind})$ such that  $\mathrm{E}(\boldsymbol{\alpha}, \boldsymbol{\lambda})$ is minimized. 
\par The key weakness of rigid docking is that the geometry of the molecule
is assumed to be deterministic and fixed. Indeed only the crystalline
structure in many cases is known (Protein Data Bank
\cite{Berman2000}). 
However, uncertainty in the molecular shape can
lead to a predicted erroneous docking site.  The {\it true}
molecular conformation can 
in fact significantly involve the incorporation of different docking
sites, in a probabilistic mixture (ensemble). In particular, thermal fluctuations and solvent interactions,
among other factors, lead to varying conformations of the protein.
In \cite{Allen2004, Neumaier} 
molecular dynamics are used to describe particle movements, using
stochastic initial velocities. Nonetheless, this
model assumes that the particles are in a vacuum, while
interactions with the solvent are ignored. In contrast, 
NAMD is a popular molecular dynamic software for simulating molecular dynamics in a solvent \cite{NAMD}.  Other approaches to molecular dynamics are
based on Langevin
dynamics \cite{Ceriotti2009,Hanggi1995,Jung1987} and Markov random models \cite{Xia2013}.

Due to the random fluctuations of a protein
in its solvent,  the molecular shape becomes stochastic.
Let $\Omega$ denote the set of all possible outcomes in a
complete probability 
space $(\Omega,\mcF,\bbP)$, with $\mcF$ a $\sigma$-algebra of events
and $\bbP$ the probability measure.  Given dynamic thermal randomness
and the resulting conformational 
uncertainties, the receptor and ligand molecular maps will now depend
on a random parameter $\omega \in \Omega$ i.e. for $k = 1, \dots P$
we have $R_k(\bx,\omega):
\R^3 \times \Omega
\rightarrow \bbC
$ and $L_k(\bx,\omega):\R^3 \times \Omega
\rightarrow \bbC$ and the energy function can be expressed as $$\E(\balpha,\blambda,\omega):= \sum_{k=1}^{P}
\int_{\R^{3}} \overline{R_k(\bx,\omega)} T(\blambda)
  D(\balpha)L_{k}(\bx,\omega) \,\mbox{d}\bx.$$
\par

\begin{figure}
\centering
    \includegraphics[height = 7.25cm, width = 7.25cm, trim=3cm 3cm 3cm 3cm, clip]{./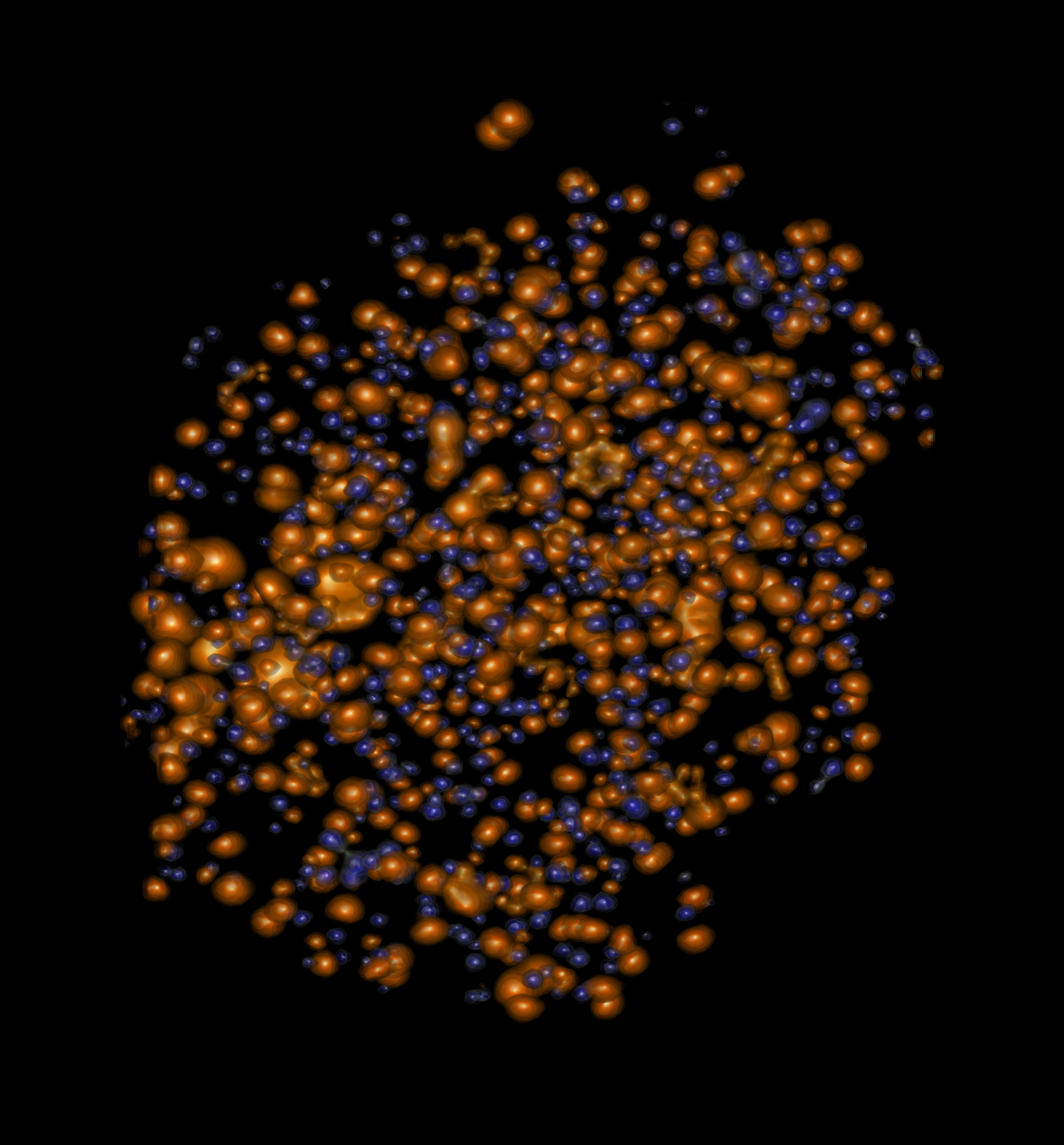}
    \caption{Non-linear electrostatic potential of Trypsin (PDB 1PPE).
    The potential fields where created with APBS \cite{Baker2001} rendered with  VolRover \cite{Bajaj2003,Bajaj2005}.  The positive
    and negative potential are rendered with blueish and orange/reddish colors  respectively.}
\label{figure:trypsin}
\end{figure}

We develop a framework to quantify the uncertainty of proposed binding sites given a stochastic manifold representation of the shape of
the receptor and ligand in solvent. This approach originates from  the fields
of scientific computing and numerical analysis.  More precisely,
these tools are adopted from the field of uncertainty quantification to assess the fitness of proposed \emph{rigid body} docking sites.  
To facilitate this process, we have developed a software tool that serves as a wrapper (plugin) for PIPER \cite{PIPER2006}, enabling the quantification of uncertainty in the predicted rigid body docking site. With the application of
Karhunen-Lo\`{e}ve Theorem on conformational molecular shape, we confirm the existence of a stochastic manifold representation of a conformational shape as an infinite linear combination of orthonormal functions with stochastic coefficients. This expansion is optimal compared to all other orthogonal representations in the sense of minimizing mean square error. A detailed discussion on the Karhunen-Lo\`{e}ve Theorem, essential for comprehending the functionality of our stochastic framework, is available in the next section.

\section{Mathematical Background}
\subsection{Karhunen-Lo\`{e}ve Theorem}

Due to the stochastic nature of the interactions between  solvent
and protein, we model the uncertainty in molecular shape as 
a random field (stochastic process). This representation is 
infinite dimensional, motivating the construction of accurate
finite dimensional noise models. The KL theorem provides an optimal finite dimensional model in a stochastic norm sense.
The Karhunen-Lo\`{e}ve Theorem  has been extensively used in model reduction, data analysis, signal processing, and many other fields. This expansion is also known as a \textit{proper orthogonal decomposition} and the methodology is also denoted as \textit{functional principal components}.\par
Let $D$ be a subset in $\mathbb{R}^{d}$. Define $L^2(D; \mathbb{R}^{q})$ to be a Hilbert space which consists of all the square integrable functions from  $D$ to $\mathbb{R}^{q}$ equipped with the inner product $(\cdot,\cdot)_{L^2(D;\R^q)}$ defined as $(f,g)_{ L^2(D;\R^q)} :=  \int_{D} f^{T}g\, \mbox{d} x$ for all $g,f \in L^2(D; \mathbb{R}^{q})$. 
Consider the random vector field $u : \Omega \to L^2(D;\mathbb{R}^q)$.
We define a suitable Hilbert space that incorporates the spatial and stochastic components. 
To this end, let $L^2_{\bbP}(\Omega;  L^2(D;\mathbb{R}^q))$
be a Bochner space that is equipped with the inner product  $(\cdot,\cdot)_{L^2_{\bbP}(\Omega;L^2(D; \mathbb{R}^q))}$
such that 
for all $w,v \in L^2_{\bbP}(\Omega;L^2(D;\mathbb{R}^q))$ we have that
\[
(w,v)_{L^2(\Omega;L^2(D; \mathbb{R}^q))}
:= \eset{(w,v)} 
:= \int_{\Omega} (w,v)\, \mbox{d} \pset. 
\]
Note that the corresponding norm $\| v \|_{L^2_{\bbP}(\Omega;  L^2(D;\mathbb{R}^q))}
$ for all  $v \in L^2(\Omega;  L^2(D;\mathbb{R}^q))$
of the Bochner space is defined as
\[
\| v \|_{L^2(\Omega;  L^2(D;\mathbb{R}^d))} := (v,v)^{\frac{1}{2}}_{L^2(\Omega;L^2(D; \mathbb{R}^q))}.
\]

\begin{remark}
The definition of the Bochner space might appear somewhat abstract. However, this space will be important
since we assume that the random fluctuations of the molecule shape are described     
by a random vector field $u \in L^2(\Omega;L^2(D; \mathbb{R}^q))$. This space allows us to
construct finite dimensional representations of the random fluctuations of the protein by
using KL expansions.
\end{remark}


\begin{definition}~
\begin{enumerate}
\item  For all $u(\bv) \in L^2(\Omega;L^2(D;\mathbb{R}^q))$, where $u({\bv}):= \left[u_1(\bv,\omega),u_2(\bv,\omega),...,u_d(\bv,\omega)\right]^T$, let 
\[
\eset{u_i(\bv,\omega)} := \int_{\Omega} u_i(\bv,\omega)\,\emph{d} \pset,
\]
 for $i = 1,2,...,d$.\par

\item  For all $u \in L^{2}\left(\Omega ; L^{2}\left(D ; \mathbb{R}^{q}\right)\right)$, 
\[
\begin{split}
&\operatorname{Cov}\left(u_{i}(\mathbf{v}), u_{j}(\mathbf{y})\right) := \mathbb{E}\left[\left(u_{i}(\mathbf{v})-\mathbb{E}\left[u_{i}(\mathbf{v})\right]\right) \right.\\
&\left.\left(u_{j}(\mathbf{y})\right.\right.\left.\left.-\mathbb{E}\left[u_{j}(\mathbf{y})\right]\right)\right]      
\end{split}
\]

for $i, j=1, ..., d$. 
Denote the covariance matrix function of $u$  between index $\bf{v}$ and index $\bf{y}$ as 
\begin{align}
R_u(\bf{v},\bf{y}) &:= \operatorname{Cov}(\bf{u}(\mathbf{v}), \bf{u}(\mathbf{y})) 
\end{align}
\item Associate to $R_u$ a linear operator $T_{R_u}$ defined  in the following way:
\begin{equation}\label{tranfdc}
       T_{R_u} :L^{2}(D)  \to L^{2}(D): f \to T_{R_u} = \int_{D} R_u(\bs,\cdot) f(\bs)d\bs  
\end{equation}

\begin{equation}\label{Fredholm}
     \int_{D} R_u(\bs,\bv) \phi_k(\bs)d\bs= \lambda_k \phi_k(\bv)
\end{equation}
where $\{ \phi_{k}, k \in \mathbb{N} \}$ are orthonormal eigenfunctions of $T_{R_u}$ in $L^2(D)$ with respect to eiganvalues $\{\lambda_k, k \in \mathbb{N}\}$.
\end{enumerate}
\label{defn1}
\end{definition}

The following theorem shows that any random field $u \in L_{\bbP}^2(\Omega, L^2(D,\bbR^q))$ can be represented as an infinite sum in terms of  eigenvalues, eigenfunctions $\{(\lambda_k, \phi_k)\}_{k \in \bbN}$ and random components. 
\begin{theorem}[Multivariate Karhunen-Lo\`{e}ve expansion]\label{KL}
Let $ u(\bv,\omega) $ be a zero mean vector process in $L^2(\Omega; L^2(D;\mathbb{R}^q))$.  Then $u(\bv,\omega)$ admits the following representation:
\begin{equation}\label{kl}
    u(\bv,\omega) = \sum_{k = 1}^{\infty} Z_k(\omega) \phi_{k}(\bv),
\end{equation}
 where the convergence is in $\|\cdot\|_{L^2_{\bbP}(\Omega;L^2(D;\mathbb{R}^q))}$, and 
 \begin{equation}\label{random_coefficient}
   Z_k(\omega) = \int_{D} 
   u(\bv,\omega)^{T}\phi_k(\bv)\emph{d}\bv  
 \end{equation}

\par Furthermore, ${Z_k}$ are uncorrelated with mean zero and variance $\lambda_k$.
\end{theorem}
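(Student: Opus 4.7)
The plan is to realize this as a Mercer-type spectral decomposition of the covariance operator, followed by a Parseval argument in the Bochner space. Because the coefficient formula $Z_k(\omega)=\int_D u(\bv,\omega)^T\phi_k(\bv)\,\mathrm{d}\bv$ requires $\phi_k$ to take values in $\bbR^q$, I interpret the operator in Definition~\ref{defn1} as acting on $L^2(D;\bbR^q)$ via $(T_{R_u}\phi)(\bv)=\int_D R_u(\bv,\bs)\phi(\bs)\,\mathrm{d}\bs$, with $R_u(\bv,\bs)\in\bbR^{q\times q}$.

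First I would establish that $T_{R_u}$ is self-adjoint, positive semi-definite, and trace-class on $L^2(D;\bbR^q)$. Self-adjointness follows from the identity $R_u(\bv,\by)^T=R_u(\by,\bv)$, which is immediate from the definition of covariance. Positivity follows because for any $\phi\in L^2(D;\bbR^q)$,
\[
(T_{R_u}\phi,\phi)=\bbE\!\left[\left(\int_D u(\bv,\omega)^T\phi(\bv)\,\mathrm{d}\bv\right)^{\!2}\right]\ge 0.
\]
For the trace-class property, the assumption $u\in L^2_{\bbP}(\Omega;L^2(D;\bbR^q))$ combined with Fubini gives $\mathrm{tr}(T_{R_u})=\int_D \mathrm{tr}(R_u(\bv,\bv))\,\mathrm{d}\bv=\bbE[\|u\|_{L^2(D;\bbR^q)}^2]<\infty$. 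Applying the spectral theorem for compact self-adjoint operators then yields an orthonormal system $\{\phi_k\}$ of eigenfunctions with nonnegative eigenvalues $\lambda_k\downarrow 0$, which (after adjoining a basis of $\ker T_{R_u}$ if necessary) can be taken complete in $L^2(D;\bbR^q)$.

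Next I would verify the statistical properties of $Z_k$. Zero mean is immediate from the hypothesis $\bbE[u]=0$ and linearity of the integral. For the covariance, Fubini together with equation~(\ref{Fredholm}) gives
\[
\bbE[Z_j Z_k]=\int_D\!\!\int_D \phi_j(\bv)^T R_u(\bv,\by)\phi_k(\by)\,\mathrm{d}\by\,\mathrm{d}\bv = \lambda_k\int_D \phi_j(\bv)^T\phi_k(\bv)\,\mathrm{d}\bv=\lambda_k\delta_{jk},
\]
so the $Z_k$ are uncorrelated with variance $\lambda_k$.

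Finally, for convergence, let $u_N(\bv,\omega):=\sum_{k=1}^N Z_k(\omega)\phi_k(\bv)$. Expanding the squared Bochner norm and using $\bbE[(u,Z_k\phi_k)_{L^2(D;\bbR^q)}]=\bbE[Z_k^2]=\lambda_k$ yields the Bessel-type identity
\[
\|u-u_N\|_{L^2_{\bbP}(\Omega;L^2(D;\bbR^q))}^2=\bbE[\|u\|_{L^2(D;\bbR^q)}^2]-\sum_{k=1}^N\lambda_k.
\]
Since $\sum_{k=1}^\infty\lambda_k=\mathrm{tr}(T_{R_u})=\bbE[\|u\|_{L^2(D;\bbR^q)}^2]$ from the first step, the right-hand side tends to $0$, giving the claimed convergence. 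The main obstacle I anticipate is the careful bookkeeping in the matrix-valued setting: verifying that $T_{R_u}$ is well-defined and trace-class on $L^2(D;\bbR^q)$, and confirming that the scalar coefficient $Z_k$ inherits $\lambda_k$ as its variance through the eigenvalue relation, rather than some blockwise or componentwise variant. Once the operator-theoretic framework is set up on the correct vector-valued space, the remaining steps are essentially the classical scalar argument applied verbatim.
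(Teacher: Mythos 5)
Your proposal is correct, and its core — the coefficient formula, the Fubini computation showing $\bbE[Z_jZ_k]=\lambda_k\delta_{jk}$, and the Bessel-type identity $\|u-u_N\|^2_{L^2_{\bbP}(\Omega;L^2(D;\bbR^q))}=\bbE[\|u\|^2_{L^2(D;\bbR^q)}]-\sum_{k=1}^N\lambda_k$ — is exactly the paper's argument. The one genuine divergence is the last step. The paper closes the gap by writing $\bbE[\|u\|^2]=\int_D\mathrm{Trace}(R_u(\bv,\bv))\,\mathrm{d}\bv$ and invoking Mercer's theorem to conclude that the trace of the kernel equals $\sum_{k}\lambda_k\,\mathrm{Trace}(\phi_k\phi_k^T)$ summed to infinity; you instead establish up front that $T_{R_u}$ is positive, self-adjoint and trace-class on $L^2(D;\bbR^q)$ and use the operator identity $\sum_{k\ge 1}\lambda_k=\mathrm{tr}(T_{R_u})=\bbE[\|u\|^2]$. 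Your route buys two things: it supplies the existence of the eigenbasis (which the paper simply posits in Definition~\ref{defn1}) and it avoids the continuity-of-kernel and compactness-of-domain hypotheses that Mercer's theorem classically requires but that the theorem statement never assumes; the cost is the extra bookkeeping of verifying the trace-class property and the kernel formula for the trace in the matrix-valued setting. Your reading of $T_{R_u}$ as an operator on the vector-valued space $L^2(D;\bbR^q)$ is also the right repair of the paper's slightly inconsistent notation, which writes the operator on scalar $L^2(D)$ while using a $q\times q$ matrix kernel.
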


\begin{remark}
 The general case of a process $u_\bv$ that is not centered can be brought back to the case of a centered process by considering $u_{\bv} - \mathop{\mathbb{E}}(u_{\bv})$, which is centered. \par
 \end{remark}

 A key feature of the KL Theorem is that the truncated expansion is optimal in the 
sense that among all finite dimensional orthonormal basis approximations
it minimizes the total mean square error.

\begin{proposition}

Let $\mcB = \{\psi_{i}(\bv,\omega) \}_{i \in \bbN}$ be a complete orthonormal basis of 
$L^2_{\bbP}(\Omega;L^2(D; \mathbb{R}^d))$, and $\mcB^{p} = \{\psi_{i}(\bv,\omega) \}_{i=1,\dots,p}$ be a collection of $p$ basis functions in $\mcB$.
Let $\tilde u_p$ be the orthogonal projection of $u(\bv,\omega)$ (approximation) onto the 
finite dimensional subspace with the following orthonormal basis functions, e.g.
\[
\tilde{u}_p(\bv, w)= \sum_{i=1}^{p} \left(\int_{D} \int_{\Omega} u(\bv,\omega) \psi_{i}(\bv,\omega)\,\emph{d}\bv
\emph{d} \bbP
\right) \psi_{i} (\bv,\omega)
\]

Denote $\mcC$ to be the set of all complete orthonormal bases of $L^2_{\bbP}(\Omega;L^2(D; \mathbb{R}^d))$.
For any basis $\tilde \mcB \in \mcC$ let $\tilde \mcB^{p}$ be the collection of any $p$ basis functions in $\tilde \mcB$
and
\[
\mcC^{p} = \{\tilde \mcB^{p}\,|\,\tilde \mcB \in \mcC \}.
\]
Then
\[
\begin{split}
&\arginf_{\mcC^{p}} \int_{D} \eset{\text{Err}_{p}^{2}(\bv)}
\, \emph{d} \bv = \{\frac{\phi_{1}(\bv)Z_1(\omega)}{\sqrt{\lambda_1}},  \frac{\phi_{2}(\bv)Z_2(\omega)}{\sqrt{\lambda_2}},\\ &\dots,\frac{\phi_{p}(\bv)Z_p(\omega)}{\sqrt{\lambda_p}}\}
\end{split}
\]

where
$$
 \text{Err}_{p}(\bv) := u - \tilde{u}_p = \sum_{i\geq p+1} \alpha_i
\psi_{i} (\bv,\omega),
$$
with each coefficient $\alpha_i$ is given by
\[
\alpha_i = \int_{D} \int_{\Omega} u(\bv,\omega) \psi_{i}(\bv,\omega)\,\emph{d}\bv
\emph{d} \bbP
\]
for $i \geq p + 1$.

\end{proposition}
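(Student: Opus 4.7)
The plan is to recast the problem as a best-$p$-term approximation in the Bochner space $\mathcal{H}:=L^{2}_{\bbP}(\Omega;L^{2}(D;\bbR^{d}))$ and identify the minimizer through the spectral theorem for the covariance operator. First, because $\tilde u_{p}$ is by construction the orthogonal projection of $u$ onto $\mathrm{span}(\psi_{1},\dots,\psi_{p})$, Pythagoras together with Fubini give
\[
\int_{D}\mathbb{E}\!\left[\mathrm{Err}_{p}^{2}(\bv)\right]d\bv
\;=\; \|u\|_{\mathcal{H}}^{2}-\sum_{i=1}^{p}|\alpha_{i}|^{2},
\qquad \alpha_{i}:=\langle u,\psi_{i}\rangle_{\mathcal{H}},
\]
so minimizing the mean-square error is equivalent to maximizing $\sum_{i=1}^{p}|\alpha_{i}|^{2}$ over admissible orthonormal $p$-tuples.

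I would then verify the candidate system. Setting $e_{k}:=\phi_{k}(\bv)Z_{k}(\omega)/\sqrt{\lambda_{k}}$ for the indices with $\lambda_{k}>0$, Theorem~\ref{KL} provides $(\phi_{k},\phi_{l})_{L^{2}(D)}=\delta_{kl}$ and $\mathbb{E}[Z_{k}Z_{l}]=\lambda_{k}\delta_{kl}$, so $\langle e_{k},e_{l}\rangle_{\mathcal{H}}=\delta_{kl}$. Using \eqref{random_coefficient}, the coefficient against $e_{k}$ is $\alpha_{k}=\mathbb{E}[Z_{k}^{2}]/\sqrt{\lambda_{k}}=\sqrt{\lambda_{k}}$; hence (with the eigenvalues ordered nonincreasingly) the candidate attains $\sum_{k=1}^{p}\lambda_{k}$, yielding the truncated-KL projection $\tilde u_{p}=\sum_{k=1}^{p}Z_{k}\phi_{k}$ with residual $\sum_{k>p}\lambda_{k}$.

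For the upper bound I would exploit the factored tensor structure $\mathcal{H}=L^{2}(\Omega)\otimes L^{2}(D;\bbR^{d})$. Writing any admissible element in product form $\psi_{i}(\bv,\omega)=\varphi_{i}(\bv)\xi_{i}(\omega)$ with $\{\varphi_{i}\}$ orthonormal in $L^{2}(D;\bbR^{d})$ and $\|\xi_{i}\|_{L^{2}(\Omega)}=1$, put $Y_{i}:=\int_{D}u^{T}\varphi_{i}\,d\bv\in L^{2}(\Omega)$, so that $\alpha_{i}=\mathbb{E}[\xi_{i}Y_{i}]$. Cauchy--Schwarz gives $|\alpha_{i}|^{2}\le \mathbb{E}[Y_{i}^{2}]=\langle\varphi_{i},T_{R_{u}}\varphi_{i}\rangle_{L^{2}(D)}$, with equality iff $\xi_{i}=Y_{i}/\sqrt{\langle\varphi_{i},T_{R_{u}}\varphi_{i}\rangle}$. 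Maximizing $\sum_{i=1}^{p}\langle\varphi_{i},T_{R_{u}}\varphi_{i}\rangle$ over orthonormal systems $\{\varphi_{i}\}\subset L^{2}(D;\bbR^{d})$ is the classical Ky~Fan / Courant--Fischer principle for the trace-class self-adjoint compact operator $T_{R_{u}}$ of~\eqref{tranfdc}, whose maximum $\sum_{k=1}^{p}\lambda_{k}$ is attained precisely by the top $p$ eigenfunctions $\varphi_{i}=\phi_{i}$; the corresponding $\xi_{i}=Z_{i}/\sqrt{\lambda_{i}}$ then reproduces the claimed basis.

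The main obstacle lies in pinning down the admissible class of bases: if no structure is imposed on $\psi_{i}\in\mathcal{H}$ beyond orthonormality, the trivial choice $\psi_{1}=u/\|u\|_{\mathcal{H}}$ already gives $|\alpha_{1}|^{2}=\|u\|_{\mathcal{H}}^{2}$ and zero residual for every $p\ge 1$, so the proposition is consistent only if the admissible orthonormal systems are implicitly restricted to factored tensors $\varphi(\bv)\xi(\omega)$ (equivalently, deterministic spatial test functions paired with optimal stochastic weights). Under that restriction the plan above closes the argument, with the remaining details---compactness and trace-class character of $T_{R_{u}}$ via $\sum_{k}\lambda_{k}=\|u\|_{\mathcal{H}}^{2}<\infty$, uniqueness up to orthogonal rotation inside each eigenspace of repeated $\lambda_{k}$, and truncation to indices with $\lambda_{k}>0$---being routine.
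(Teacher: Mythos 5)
Your argument is essentially correct, and it supplies more than the paper itself does: the paper offers no in-house derivation of this proposition, only a pointer to the literature (the appendix defers to Theorem 2.7 of the cited Schwab--Todor reference), so there is no internal proof to compare against. Your route --- Parseval/Pythagoras to reduce the problem to maximizing $\sum_{i=1}^{p}|\alpha_i|^2$, verification that the normalized KL modes $\phi_k(\bv)Z_k(\omega)/\sqrt{\lambda_k}$ are orthonormal with $\alpha_k=\sqrt{\lambda_k}$, then Cauchy--Schwarz in $\omega$ followed by the Ky Fan / Courant--Fischer trace-maximization principle for the compact, self-adjoint, trace-class operator $T_{R_u}$ --- is the standard proof of KL optimality and is in substance what the cited theorem establishes. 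The most valuable part of your write-up is the diagnosis that the proposition as literally stated is false: if $\mcC^{p}$ genuinely ranges over arbitrary $p$-element orthonormal subsets of $L^2_{\bbP}(\Omega;L^2(D;\mathbb{R}^d))$, then $\psi_1=u/\|u\|$ extends to a complete orthonormal basis and gives zero residual for every $p\ge 1$, so the KL system cannot be the arginf. The optimality claim only holds over separable systems $\psi_i(\bv,\omega)=\varphi_i(\bv)\xi_i(\omega)$ with deterministic spatial factors $\{\varphi_i\}$ orthonormal in $L^2(D;\mathbb{R}^d)$, which is precisely the hypothesis of the classical statement; you correctly supply this restriction rather than paper over it.

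Two points to tighten. First, orthonormality of the products $\varphi_i\xi_i$ in the Bochner space does not force the spatial factors to be orthonormal in $L^2(D;\mathbb{R}^d)$ (take $\varphi_1=\varphi_2$ with $\xi_1\perp\xi_2$ in $L^2(\Omega)$), so the orthonormality of $\{\varphi_i\}$ must be imposed as part of the definition of the admissible class, not inferred from orthonormality in $L^2_{\bbP}(\Omega;L^2(D;\mathbb{R}^d))$; as written you assert it in passing. Second, the minimizer is unique only up to orthogonal rotations within eigenspaces of repeated $\lambda_k$, so the conclusion should read ``the KL system attains the infimum'' rather than an equality of the arginf with a single set. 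With those hypotheses made explicit, your proof closes.
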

\begin{remark}
    To apply the Karhunen-Lo\`{e}ve expansion, we only need to have eigenfunctions $\phi_k$ and random coefficients $Z_k$. It is not hard to obtain the former as long as we can construct a convariance matrix based on data and apply the Method of Snapshots \cite{Sirovich} to estimate eigenfunctions empirically. However, it can be infeasible to estimate the true probability distribution of the random field and hence the true random coefficients $Z_k$ due to high dimensions, especially in general, $Z_k$ are only uncorrelated rather than independent.
\end{remark}
\begin{remark}
   In practice, the Karhunen-Lo\`{e}ve expansion is truncated to a finite number of terms. Consider a $d \times d$ covariance matrix, where $d$ denotes the feature dimension. 
High feature dimension leads to a large covariance matrix, where solving the
eigen decomposition problem may be infeasible.
The Method of Snapshots derived by Sirovich \cite{Sirovich} can potentially reduce the high dimensional problem of finding eigenfunctions of a continuous convariance function $R_u(\bs,\bv)$, to an eigen-decomposition problem for a finite-dimensional matrix.  This is done by taking snapshots (samples) at discrete times, where the number of snapshots is usually much smaller than dimension $d$, hence making the 
solution of the eigen-decomposition problem more affordable (see Appendix \ref{snapshots} for more  details). 
\end{remark}

 It can be shown that the Karhunen-Lo\`{e}ve expansion preserves the covariance structure of observations no matter what distribution the random coefficients follow.
\begin{proposition}\label{prop0}
    Assume that $\tilde{Z}_k(\omega)$ for all $k \in \mathbb{N}$ are orthonormal in $L^2(\Omega;  L^2(D;\mathbb{R}^q))$  of mean 
    zero, and of variance $\lambda_k$. The Karhunen-Lo\`{e}ve expansion of zero mean vector process $u(\bv)$ with $\tilde{Z}_k$ as random coefficients will be:
    $$\tilde{u}(\mathbf{v}, \omega)=\sum_{k \in \mathbb{N}} \phi_k(\bv)\tilde{Z}_k(\omega)$$
    
  The covariance function
 $\operatorname{Cov}(\tilde{v}(\mathbf{x}, \omega), \tilde{v}(\mathbf{y}, \omega))=\sum_{k \in \mathbb{N}} \lambda_k \phi_k(\mathbf{x}) \phi_k(\mathbf{y}) =\operatorname{Cov}(v(\mathbf{x}, \omega), v(\mathbf{y}, \omega))$. In other words, no matter what distribution $\tilde{Z}_k$ follows, the covariance structure of the new random field formed based on these random coefficients remains consistent with that of the original.
\end{proposition}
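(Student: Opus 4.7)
The plan is to compute the covariance of $\tilde u$ directly from its definition, exploit the prescribed second-moment structure of the $\tilde Z_k$, and then match the resulting bilinear series against the Mercer-type expansion of $R_u$ that is implicit in Theorem \ref{KL}.

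First I would note that since each $\tilde Z_k$ has mean zero and the $\phi_k$ are deterministic, the swap of $\mathbb{E}[\cdot]$ with the $L^2$-convergent series gives $\mathbb{E}[\tilde u(\bv,\omega)]=\sum_k \phi_k(\bv)\,\mathbb{E}[\tilde Z_k(\omega)]=0$, so the covariance collapses to a second moment. Next I would write
\[
\operatorname{Cov}\!\bigl(\tilde u(\bx,\omega),\tilde u(\by,\omega)\bigr)
= \mathbb{E}\Bigl[\Bigl(\sum_{k}\phi_k(\bx)\tilde Z_k(\omega)\Bigr)\Bigl(\sum_{j}\phi_j(\by)\tilde Z_j(\omega)\Bigr)\Bigr].
\]
Interpreting the hypothesis ``orthonormal of variance $\lambda_k$'' as $\mathbb{E}[\tilde Z_k \tilde Z_j]=\lambda_k \delta_{kj}$, the double series collapses under expectation to
\[
\operatorname{Cov}\!\bigl(\tilde u(\bx,\omega),\tilde u(\by,\omega)\bigr)
= \sum_{k\in\bbN} \lambda_k\,\phi_k(\bx)\,\phi_k(\by).
\]

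The remaining step is to recognize the right-hand side as the Mercer decomposition of the original covariance kernel. Applying Theorem \ref{KL} to $u$ itself and computing $\operatorname{Cov}(u(\bx),u(\by))$ via the same term-by-term second-moment calculation, using that the original coefficients $Z_k$ are uncorrelated with mean zero and variance $\lambda_k$, yields
\[
\operatorname{Cov}\!\bigl(u(\bx,\omega),u(\by,\omega)\bigr)
= \sum_{k\in\bbN} \lambda_k\,\phi_k(\bx)\,\phi_k(\by),
\]
which proves the claimed identity. Alternatively, one can recognize this as the spectral expansion of the Hilbert--Schmidt operator $T_{R_u}$ defined in \eqref{tranfdc}--\eqref{Fredholm}, whose eigenpairs $(\lambda_k,\phi_k)$ reconstruct $R_u(\bx,\by)$ by Mercer's theorem.

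The main obstacle I anticipate is purely technical: justifying the interchange of $\mathbb{E}[\cdot]$ with the (a priori only $L^2_{\bbP}$-convergent) infinite sums, and convergence of the resulting bilinear series in the appropriate pointwise or $L^2(D\times D)$ sense. I would handle this by truncating to the partial sums $\tilde u_N(\bv,\omega)=\sum_{k=1}^{N}\phi_k(\bv)\tilde Z_k(\omega)$, applying the orthogonality relation on the finite double sum, and then passing to the limit using the $L^2_{\bbP}(\Omega;L^2(D;\bbR^q))$-convergence of $\tilde u_N\to\tilde u$ together with continuity of the inner product. Beyond that, the argument is a direct computation and does not require any additional distributional information about the $\tilde Z_k$, which is exactly what the proposition asserts.
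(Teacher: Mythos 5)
Your proposal is correct, but note that the paper does not actually carry out this computation: its ``proof'' of Proposition \ref{prop0} is a one-line pointer to Theorem 2.7 of \cite{Schwab2006}. Your argument supplies the standard direct verification that the citation stands in for: truncate to $\tilde u_N=\sum_{k\le N}\phi_k\tilde Z_k$, use $\mathbb{E}[\tilde Z_k\tilde Z_j]=\lambda_k\delta_{kj}$ to collapse the double sum to $\sum_{k\le N}\lambda_k\phi_k(\bx)\phi_k(\by)$, pass to the limit by $L^2$-continuity of the inner product, and identify the limit with $R_u(\bx,\by)$ via Mercer's theorem \eqref{mercer} applied to the eigenpairs of $T_{R_u}$ from \eqref{tranfdc}--\eqref{Fredholm} (equivalently, by running the identical second-moment computation on the original coefficients $Z_k$, which Theorem \ref{KL} guarantees are uncorrelated with mean zero and variance $\lambda_k$). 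Two small points deserve care. First, the hypothesis ``orthonormal \dots of variance $\lambda_k$'' is internally inconsistent as written; your reading $\mathbb{E}[\tilde Z_k\tilde Z_j]=\lambda_k\delta_{kj}$ (orthogonal, with the variances carried by the coefficients rather than normalized to one) is the one under which the stated conclusion holds, and you were right to make that interpretation explicit. Second, in the multivariate setting the $\phi_k$ are $\mathbb{R}^q$-valued, so the covariance is matrix-valued and the bilinear term should be the outer product $\phi_k(\bx)\phi_k(\by)^T$, exactly as in \eqref{mercer}; the scalar notation in the proposition statement glosses over this. Neither point is a gap in your argument --- the partial-sum justification of the limit interchange is precisely the technical step that makes the computation rigorous, and no distributional information about the $\tilde Z_k$ beyond second moments is used, which is the content of the proposition.
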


Furthermore, if the random field is Gaussian, random coefficients $Z_k$ are not just uncorrelated.
\begin{proposition}
If the process $u(\bv) \in  L^2(\Omega:  L^2(D;\mathbb{R}^d))$ is Gaussian, then the random variables $Z_k \sim \mcN(0,\lambda_k)$ are normal and identically independent distributed.
\end{proposition}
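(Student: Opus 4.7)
The plan is to reduce the claim to two well-known facts: (a) a continuous linear functional of a Gaussian random field is Gaussian, and (b) uncorrelated jointly Gaussian random variables are independent. Normality of each $Z_{k}$ and the variance statement $\mathrm{Var}(Z_{k})=\lambda_{k}$ together with pairwise uncorrelatedness have already been delivered by Theorem~\ref{KL}; the extra content here is precisely independence, which requires joint Gaussianity.

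First, I would argue that each individual $Z_{k}(\omega)=\int_{D} u(\bv,\omega)^{T}\phi_{k}(\bv)\,\mbox{d}\bv$ is Gaussian. This is a Bochner-integral pairing of the Gaussian random field $u\in L^{2}_{\bbP}(\Omega; L^{2}(D;\bbR^{q}))$ against a deterministic element $\phi_{k}\in L^{2}(D;\bbR^{q})$, so it defines a continuous linear functional on the Gaussian process. The standard way to make this rigorous is to approximate the integral by Riemann-type sums $S_{n}=\sum_{j} u(\bv_{j},\omega)^{T}\phi_{k}(\bv_{j})\,\Delta\bv_{j}$, note that each $S_{n}$ is a finite linear combination of components of $u$ at deterministic points and hence Gaussian (by Gaussianity of the field), and then pass to the $L^{2}_{\bbP}$-limit. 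Since the class of centered Gaussian random variables is closed in $L^{2}_{\bbP}(\Omega)$ (Gaussianity is preserved under $L^{2}$ limits, as one sees from convergence of characteristic functions), the limit $Z_{k}$ is Gaussian with mean zero.

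Second, I would promote this to joint Gaussianity of $(Z_{1},Z_{2},\dots,Z_{n})$ for every finite $n$. For any real coefficients $a_{1},\dots,a_{n}$,
\[
\sum_{k=1}^{n} a_{k} Z_{k}
= \int_{D} u(\bv,\omega)^{T}\Bigl(\sum_{k=1}^{n} a_{k}\phi_{k}(\bv)\Bigr)\mbox{d}\bv,
\]
which is again a continuous linear functional of the Gaussian field $u$ evaluated against the deterministic element $\sum_{k} a_{k}\phi_{k}\in L^{2}(D;\bbR^{q})$. The same approximation argument as above yields that this linear combination is Gaussian. Since arbitrary real linear combinations of $(Z_{1},\dots,Z_{n})$ are Gaussian, the vector $(Z_{1},\dots,Z_{n})$ is jointly Gaussian.

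Finally, Theorem~\ref{KL} gives $\eset{Z_{j}Z_{k}}=\lambda_{k}\delta_{jk}$, so the covariance matrix of $(Z_{1},\dots,Z_{n})$ is diagonal with entries $\lambda_{1},\dots,\lambda_{n}$. For jointly Gaussian vectors a diagonal covariance is equivalent to independence of components, so $Z_{1},\dots,Z_{n}$ are independent; letting $n$ be arbitrary, the whole sequence $\{Z_{k}\}_{k\in\bbN}$ is a family of independent $\mcN(0,\lambda_{k})$ random variables. The main subtlety in this plan is Step~1/2, namely making the passage from Gaussianity of the field at finitely many points to Gaussianity of integral linear functionals; once this is in hand, the rest is immediate. (Strictly speaking, the $Z_{k}$ are independent but not identically distributed, since $\mathrm{Var}(Z_{k})=\lambda_{k}$ varies with $k$; the statement should be read with that clarification.)
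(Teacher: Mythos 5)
Your proposal is correct, and for the marginal normality of each $Z_k$ it follows essentially the same route as the paper: the paper's appendix proof also discretizes $\int_D u(\bv,\omega)^T\phi_k(\bv)\,\mbox{d}\bv$ by a Riemann sum $R^k(\triangle)$ over a partition into rectangles, observes that a finite linear combination of values of a Gaussian field is Gaussian, computes $\lim_{|\triangle|\to 0}\mathrm{Var}(R^k(\triangle))=\lambda_k$, and passes to the limit of the characteristic function to conclude $Z_k\sim\mcN(0,\lambda_k)$ --- which is the same mechanism you invoke when you say Gaussianity is preserved under $L^2_{\bbP}$ limits. Where you genuinely go beyond the paper is independence: the paper's proof stops at the marginal law of a single $Z_k$ and never establishes joint Gaussianity of $(Z_1,\dots,Z_n)$, so the claimed independence is asserted but not actually derived there. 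Your second step --- showing that $\sum_{k=1}^n a_k Z_k$ is itself the pairing of $u$ against the deterministic function $\sum_k a_k\phi_k\in L^2(D;\bbR^q)$ and hence Gaussian by the identical Riemann-sum argument, so that the vector is jointly Gaussian and diagonal covariance implies independence --- supplies exactly the missing piece, at essentially no extra cost. Your closing caveat is also well taken: since $\mathrm{Var}(Z_k)=\lambda_k$ varies with $k$, the $Z_k$ are independent but not identically distributed, so the proposition's phrase ``identically independent distributed'' should be read simply as ``independent.'' The only point both you and the paper leave implicit is that the Riemann sums actually converge to $Z_k$ in $L^2_{\bbP}(\Omega)$ (this needs mean-square continuity or similar regularity of the field), but that is a shared, minor technicality rather than a defect of your argument relative to the paper's.
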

Consequently, we can assume that the random vector field $u(\bv, \omega)$ follows a Gaussian distribution, which implies the independence of random coefficients. As already demonstrated in Proposition \ref{prop0}, regardless of the distribution assigned to the stochastic manifold, the covariance structure for atomic coordinates remains consistent. This consistency is ensured by Mercer's theorem, which affirms that a symmetric, positive-definite matrix can be uniquely expressed as a sum of a convergent sequence, composed solely of eigenvalues and eigenfunctions (as shown in Appendix, equation \eqref{mercer}). This representation thus retains the original molecular structural information \cite{juliostochastic2022}.

\begin{remark} For reasons of generality, the probability spaces are defined with respect to an extensive set of outcomes $\Omega$. 
However,
under certain conditions, the probability measure $\pset:\Omega\rightarrow 
\R^{+}$  can be associated with the probability density function
$\varrho(\bx):\R^q \rightarrow \R^{+}$, for some $q \in \mathbb{N}^{+}$\cite{Castrillon2016}. 
Let $\bz = \left(z_1(\omega),z_2(\omega),...,z_q(\omega)\right)$ be a q-valued random vector where each random variable $z_k(\omega)$ is defined as in Theorem \ref{KL}. Note that the random vector $\bz$ is a function from $\Omega$ to $\mathbb{R}^{q}$.
This 
gives us the more familiar form of expectation:
\[
 \mathbb{E}(u(\bv,\omega)) := \int_{\Omega} u(\bv,\omega)\,\emph{d} \pset =\int_{\mathbb{R}^q} u(\bv,\bz) \,\varrho(\bz)\mbox{\mbox{\emph{d}}}\bz.
\] 
By our assumption, the probability density function $\varrho$ is known to be the Gaussian probability density function. For convenience, in the subsequent sections, we will replace all $\omega$ with $\bz(\omega)$.

\end{remark}

\section{Problem Formation}
\subsection{Protein Stochastic Model}

\begin{figure}
\centering
\begin{tikzpicture}[scale=0.9]
\draw[dashed,fill=black!30!green, opacity=0.2] (0,0) to
[closed, curve through={(1,.5) .. (2,0) .. (3,.5)}] (4,0);
\node at (2,-0.9) {$\bx_1$};
\draw[fill = black] (2.3,-1.1) circle (2pt);
\node at (0.7,-2) {$\bx_2$};\draw[fill = black] (1,-2.2) circle (2pt);
\node at (2,-1.5) {$\bx_{k}$};
\draw[fill = black] (2.3,-1.7) circle (2pt);
\node at (3,-0.3) {$\bx_{n}$};
\draw[fill = black] (3.1,-0.6) circle (2pt);
\node at (2.7,-2.5) {$\epsilon_{{\mcA}(\vartheta)}(\bx,\vartheta)$};
\node at (3.5,-3.3) {$\epsilon_{S}(\bx,\vartheta)$};
\node at (0.7,-.3) {${\mcA}(\vartheta)$};
\node at (-0.25,0.5) {$\Upsilon_{S}(\vartheta)$};
\node at (4.5,0.5) {$\partial \mcD$};
\draw[black!60!green]  (-1,-3.75) rectangle (5,1);
\end{tikzpicture}
\caption{Two dimensional representation of receptor and solvent.}
\label{SPDEs:fig1}
\end{figure}
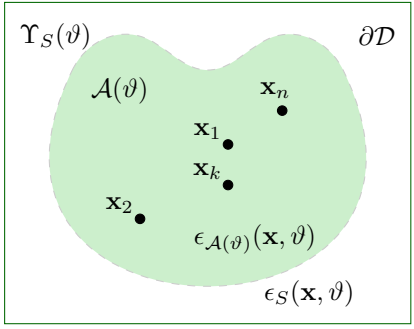

The uncertainties in the molecular conformation can be
propagated to the receptor and ligand molecular maps and eventually to
the energy function as
$$\E(\balpha,\blambda,\omega):= \sum_{p=1}^{P}
\int_{\R^{3}} \overline{R_p(\bx,\omega)} T(\blambda)
  D(\balpha)L_{p}(\bx,\omega) \mbox{d}\bx.$$
A starting point is to determine the docking site $(\balpha,\blambda)$ that minimizes mean log 
$\esets{\log \E(\balpha,\blambda,\omega)}.$
Nevertheless, a large standard deviation suggests significant variability in the energy function with respect to random conformations. Under such circumstances, it may be more suitable to identify potential candidates for the docking site
$(\balpha,\blambda)$
which exhibit both small means and small standard deviations.
The mean and second moment of energy are defined as:
\begin{equation}
\begin{split}
       &\eset{\log(\E(\balpha,\blambda, \omega) + \delta)}\\ &= \int_{\Omega} \log(\E(\balpha,\blambda, \omega) + \delta) \rho(\omega) \id \omega  
       \end{split}
   \label{eq:mean}
\end{equation}
 and

\begin{equation}
    \begin{split}
        &\eset{\left(\log(\E(\balpha,\blambda, \omega) + \delta)\right)^2}  \\
        &= \int_{\Omega} \left(\log(\E(\balpha,\blambda, \omega) + \delta)\right)^2 \rho(\omega) \id \omega.
    \end{split}
    \label{eq:second moment}
\end{equation}

A better choice is to weight the interaction energy using the standard deviation using the
stochastic optimization
\begin{equation}
\begin{split}
(\bm{\alpha}^*,\bm{\lambda}^*) &:= \argmin_{\balpha,\blambda} \left( \esets{ \log(\E(\balpha,\blambda,\omega) + \delta)} \right. \\
&\quad \left. + \beta \, \mbox{SD}[\log \E(\balpha,\blambda,\omega)] \right),
\end{split}
\label{description:eqn1}
\end{equation}
for some user-given parameter $\beta >0$ and $\delta > 0 $, ensuring 
 that $\E(\balpha,\blambda,\omega) + \delta > 1$. This optimization
will look for the docking configuration with a small mean and small dispersion from the stochastic molecular
conformation. Solving it involves computing the mean and standard deviation of the energy
function.
For many cases, the evaluation of each rotational and translational search $(\bm{\alpha}, \bm{\lambda})$ involves a function $\E(\bm{\alpha}, \bm{\lambda}, \omega)$ that is high-dimensional, non-Gaussian, and non-linear with respect to the stochastic parameter $\omega$. Each rotational and
translational configuration $(\balpha,\blambda)$ will correspond to
computing 
the mean and SD of the high dimensional function
$\log (\E(\balpha,\blambda,\omega) + \delta)$. 
If we model $\E(\balpha, \blambda, \omega): \Omega \rightarrow \mathbb{R}^6$ with $N$ stochastic dimensions, the total number of dimensions of the domain of $\E(\balpha, \blambda, \omega): \mathbb{R}^6 \times \Omega \rightarrow \mathbb{R}$ is $N+6$. Consequently, even with a relatively small number of dimensions $N$, stochastic optimization becomes intractable. 

To manage this complexity, we employ an uncertainty decision tree to eliminate configurations of high uncertainty which are unlikely to be the true docking site.
This process is based on determining the level of uncertainty through comparing mean and variance. Configurations with both low mean and low variance can be considered  promising candidates for optimal docking sites. Conversely, configurations with high mean or variance are considered highly uncertain and thus unlikely to be the optimal docking site.
More details are shown in \lcol{Figure} \ref{figure:decisiontree}. 
By leveraging the decision tree, we can carefully select a list of plausible docking site candidates by excluding those configurations of  high uncertainty.




\begin{remark}
 Note that, due to local low regularity and high dimensions,
 comprehensively capturing all uncertainty within the stochastic molecular manifold is computationally challenging. However, identifying sites with exceptionally high uncertainty remains feasible through the calculation of the mean and variance of energy. If these metrics indicate high uncertainty, we can have high confidence in these findings. Conversely, if both mean and variance are low, this does not conclusively indicate low uncertainty at the current stage, due to potential computational inaccuracies arising from truncation and integration errors. For a detailed analysis of the errors and challenges faced by this framework,
 please refer to the Discussion
 \ref{Discussion}.
\end{remark}
\begin{remark}
Criteria based on means and variances of energies, $C_{\mu}$ and $C_{\sigma^2}$, have to be established to assess whether a configuration exhibits high uncertainty. One approach involves selecting high quantiles (e.g., the 0.8 quantile) of the empirical distribution of energy means and variances to form the respective criteria.
 \label{remark:uncertainty}
\end{remark}

\begin{figure}
\centering
\begin{tikzpicture}
  [
    sibling distance        = 7em,
    level distance          = 11em,
    edge from parent/.style = {draw, -latex},
    every node/.style       = {font=\footnotesize},
  ] 
  \node at (-3,0) {\includegraphics[scale=0.1]{./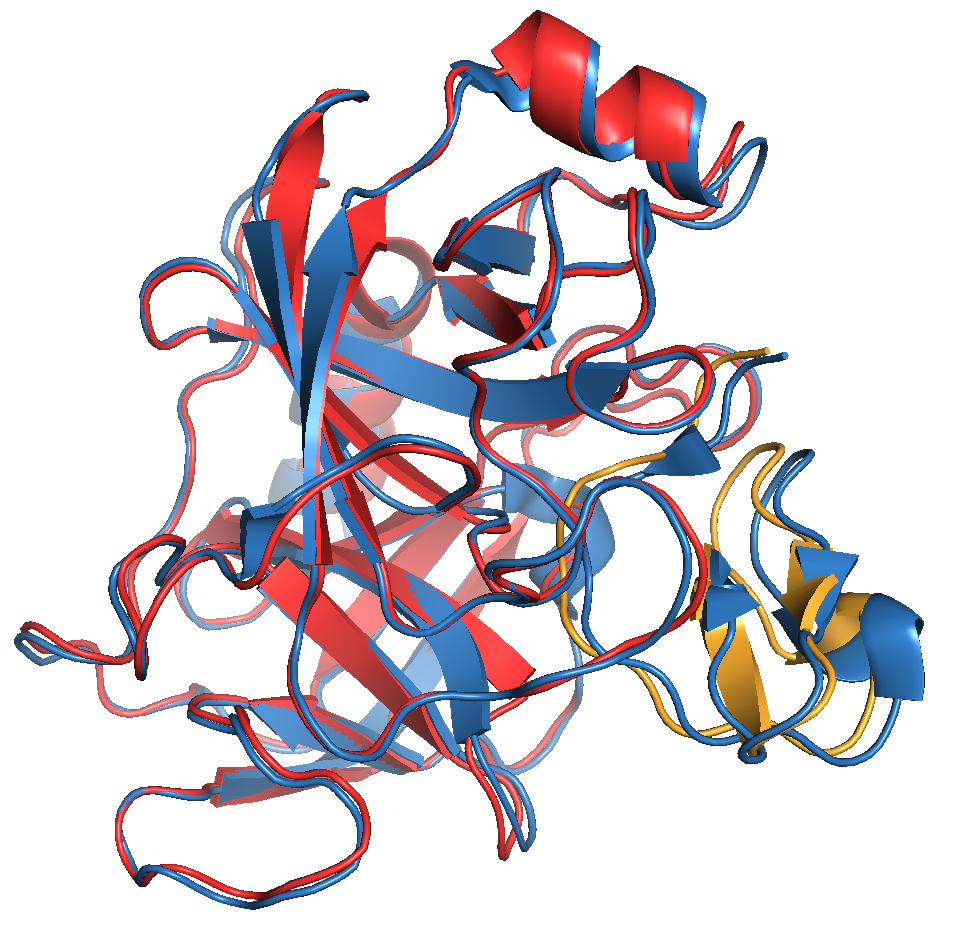}};
  \node [root] {Docking site \\ Candidate}
    child { node [env] {Unqualified \\ Candidate}
      edge from parent node  [left] [align=center]{$\eset{\log(\E(\balpha, \blambda, \omega) + \delta)}$ \\ $> C_{\mu}$} }
    child { node [env,circle,minimum size = 7mm] {}
      child { node [env] {Unqualified \\ Candidate}
        edge from parent node [left][align=center] {$\text{Var}[\log(\E(\balpha, \blambda, \omega) + \delta)]$\\$> C_{\sigma^2}$} }       
           child { node [env] {Potential \\ Candidate}
              edge from parent node [right][align=center]
                {$\text{Var}[\log(\E(\balpha, \blambda, \omega) + \delta)]$\\$\leq C_{\sigma^2}$}}
              edge from parent node [right] [align=center]{$\eset{\log(\E(\balpha, \blambda, \omega) + \delta)}$ \\ $\leq C_{\mu}$} };
\end{tikzpicture}

\caption{Decision Tree for Docking Site Candidate Selection:
Step 1: Compare the variance of energy, if mean is greater than the predefined criterion $C_{\mu}$, then the corresponding configuration is determined to be of high uncertainty, and should be filtered out. Step 2:  Among the configurations with mean less than $C_{\mu}$, those with an energy variance exceeding $C_{\sigma^2}$, the variance criterion, are discarded. Conversely, configurations with a variance below $C_{\sigma^2}$ are retained. By removing those configurations which already be determined as impossible candidates of high uncertainty, the final result obtained is a list of potential candidates which is awaited for further study in higher dimensional stochastic molecular conformational space to identify the optimal docking site.
}
\label{figure:decisiontree}
\end{figure}

\subsection{The approach}
In this section, we outline the methodology for establishing the stochastic framework of the molecular dynamic manifold. We start by assuming that the conformation manifold of either the receptor or ligand behaves as a Gaussian random field. This field is then truncated by drawing a circle centered on the mean, with a radius extending to three standard deviations. 


The first step involves leveraging the Karhunen-Loève expansion, as previously described, on the atom coordinates of the receptor or ligand to derive the stochastic atom coordinates. The random vector fields $u_R$ and $u_L$ represent the three-dimensional stochastic coordinates within the molecular manifolds of the receptor and ligand, respectively, expressed as $u_R(\bv,\omega)= [x_R(\bv,\omega), y_R(\bv,\omega), z_R(\bv,\omega)]^T$ for the receptor and $u_L(\bv,\omega)= [x_L(\bv,\omega), y_L(\bv,\omega), z_L(\bv,\omega)]^T$ for the ligand.

The truncated representations of these stochastic coordinates are as follows:
\[
\begin{split}
    F_{R}(\bv,\omega) &= \mathbb{E}(u_R)+\sum_{k = 1}^{N_R} Z^{R}_k(\omega)\phi^{R}_{k}(\bv)\\ &= \mathbb{E}(u_R)+\sum_{k = 1}^{N_R}\sqrt{\lambda_{k}^R} \phi^R_{k}(\bv)Z^{R}_{k}(\omega) \\
\end{split}
\]
where
\[
\begin{split}
Z^{R}_k(\omega) &= \int_{D} u_{R}(\bv,\omega)^{T}\phi^{R}_k(\bv)d\bv 
\end{split}
\]
and
\[
\begin{split}
    F_{L}(\bv,\omega) &= \mathbb{E}(u_L)+\sum_{k = 1}^{N_L} Z^{L}_k(\omega)  \phi^{L}_{k}(\bv) \\
    &=\mathbb{E}(u_L)+ \sum_{k = 1}^{N_L}\sqrt{\lambda_{k}^L} \phi^L_{k}(\bv)Z^{L}_{K}(\omega)\\
    \end{split}
    \]
where
    \[
    \begin{split}
    Z^{L}_k(\omega) &= \int_{D} u_{L}(\bv,\omega)^{T}\phi^{L}_k(\bv)d\bv.
\end{split}
\]
In the above equations, $F_{R}$ and $F_{L}$ represent the truncated stochastic approximations of $u_R$ and $u_L$, respectively. $N_R$ and $N_L$ denote the dimensions of truncated stochastic spaces for the receptor and ligand. $(\lambda_k^R,\phi_k^R(\bv))$ and $(\lambda_k^L,\phi_k^L(\bv))$ represent the eigenpairs for the receptor and ligand, with the eigenvalue coefficients $\lambda_k^R$ and $\lambda_k^L$ organized in a descending sequence by $k$. And $Z_k$'s are independent, zero-mean, unit variance normal random variables.
\par
With these stochastic coordinates, it is feasible to generate PDB  files for stochastic receptor/ligand domains. This enables the generation of stochastic shape conformation realizations for receptor/ligand that are independent of specific molecular mapping techniques, as these realizations are fundamentally based on atomic coordinates. Subsequently, by employing docking software such as PIPER, one can obtain the energies associated with these stochastic domains. These energy calculations enable the computation of statistical measures, enhancing our understanding of molecular interactions within this stochastic framework.

\begin{figure*}[h]
     \centering
\includegraphics[scale = 0.7,width=\linewidth]{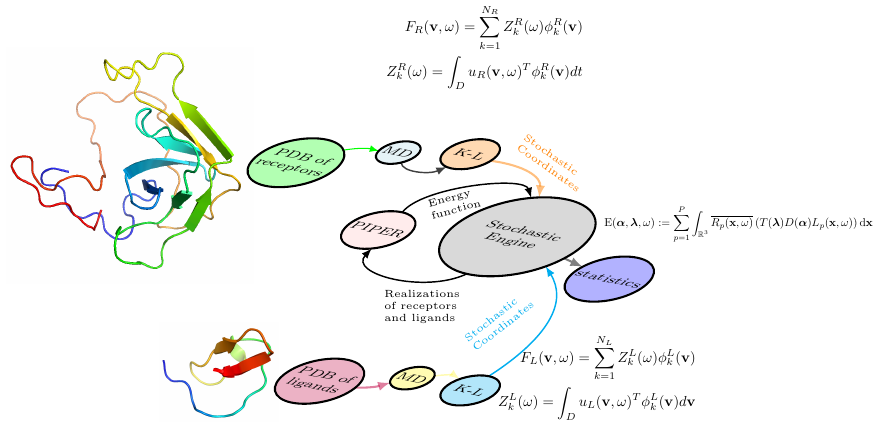}

       \caption{The flow diagram of the process of constructing the stochastic framework for the molecular dynamic manifold. Step 1: Obtain  realizations of receptor/ligand by inputting PDB files into molecular dynamic software such as NAMD. Step 2: Apply Karhunen-Lo\`{e}ve expansion to generate stochastic 3-dimensional coordinates. Step 3: Input stochastic coordinates into Stochastic Engine to generate realizations of stochastic receptor/ligand manifolds. Step 4: Use rigid protein docking program such as PIPER to evaluate interaction energies of the stochastic molecular conformation. Step 5: Compute statistics (mean and standard deviation) to look for the optimal docking site or promising candidates of the optimal docking site. }
    \label{pipeline}
\end{figure*}
\begin{figure}
\centering
\begin{tikzpicture}
\node at (0,0) {\includegraphics[scale = 0.4 ,trim= 2cm 6cm -3cm 9cm,clip = true]{./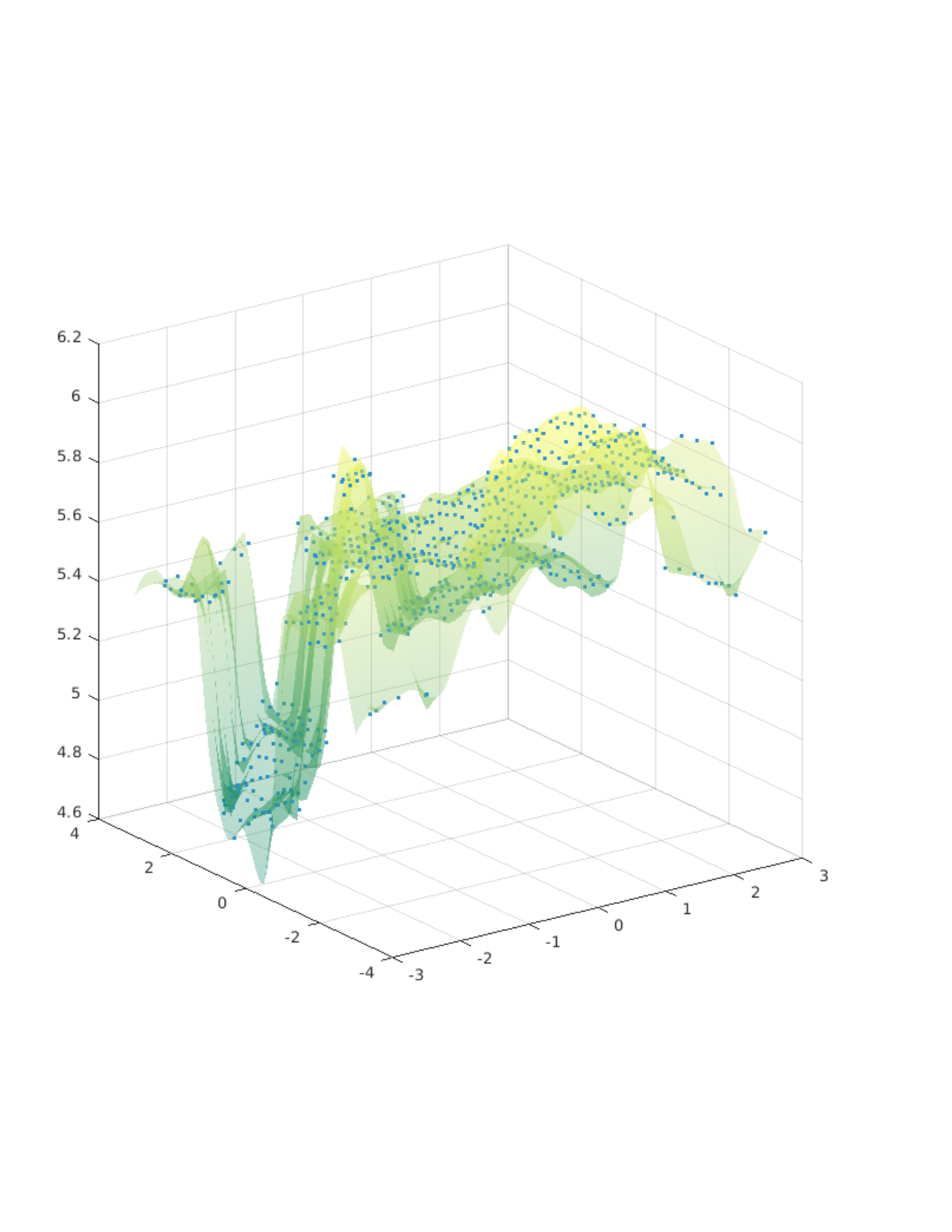}};
\node at (-3,-2.5) {$Z_1(\omega)$};
\node at (1,-2.5) {$Z_2(\omega)$};
\node at (-5.1,1) [rotate=90] {$\log(\E(\balpha,\blambda,\omega) + \delta)$};
\end{tikzpicture}
\caption{Energy profile from PIPER with 
respect to 2 dimensional stochastic deformation for Trypsin. As the figure shows, there is a relatively sharp discontinuity on the energy manifold, which prompts us to look for a more feasible method compared to traditional Gaussian quadrature methods to deal with the daunting computation task with reasonable accuracy and cost.}
\label{SPDEs:fig2}
\end{figure}

\subsection{Computation of statistics}

The objectives of this calculation are delineated in equations \eqref{eq:mean}  and \eqref{eq:second moment}, which represent the mean and the second moment of energy, respectively. Common numerical methods for approximating integrals include quadrature methods such as Simpson's method \cite{simpson}, Gaussian quadrature \cite{GQ}, and etc.
Quadrature methods aim to approximate definite integrals by evaluating the function at designated points, subsequently applying weights, and summing the results. For example, let $f:S \to \mathbb{R}$ where $S:= [a,b]^n \text{ and }a,b \in \mathbb{R}$,

$$\int_{S} f(\bx) \mbox{d} \bx \approx \sum_{i=1}^n W_i f({\bx}_i)  = \mathcal{R}(f(\bx))$$
 where $\{{\bx}_i\}_{i=1}^{n}$ are the quadrature points.  Here 
 $\{W_i\}_{i=1}^n$ denote corresponding weights.  Both of the above depend on the choice of quadrature rule, and $\mathcal{R}(f(\bx))$ denotes the integration interpolant of function $f(\bx)$.
 
 \par
Considering the constraints of the protein energy manifold, direct evaluation of these integrals is non-trivial owing to the finite spectrum of known discrete values. Moreover, the designated quadrature points in these methods present additional challenges, especially when values at these locales are difficult to ascertain. It's important to note that if there's any irregularity at just one point within the range we're looking at, the accuracy of our approximation will suffer.

\lcol{Figure} \ref{SPDEs:fig2} displays the energy profile of the electrostatic field for the Trypsin protein, utilizing the Karhunen-Lo\'{e}ve approximation of a two-dimensional stochastic deformation from Piper.  As observed, the profile appears to be locally smooth, but not globally. 
This motivates us to seek a more accurate representation of quantity of interest (equations \eqref{eq:mean} and \eqref{eq:second moment})  that can capture both high and low regularity components of the energy domain in a quadrature form. We now consider a representation based on radial basis functions (RBF)
together with polynomial interpolation:

$$  {\mathcal{R}}[f(\bx)] =
\sum_{i=1}^{p} \mathbf{e}[i]m_{i}(\bx)
+
\sum\limits_{j=1}^{\eta} {\mathbf{v}[j] } \phi
(\bx,{\bx}_j ).
$$
The first term is the polynomial interpolation with lagrange basis $m_i(\bx)$ and weights $\mathbf{e}[i] 
 \in \mathbb{R},i=1,\dots,p$, capturing the region of high regularity. The second term is the radial basis interpolation with isotropic kernel function $\phi:\mathbb{R}^n \times \mathbb{R}^n\rightarrow \mathbb{R}$, nodes $\{\bx_j\}_{j=1}^{\eta}$, and weight $\mathbf{v} \in \mathbb{R}^{\eta} $ capturing the region of low regularity. 

Note that our manifold is stochastic, with suitable choice of weights and quadrature nodes $\omega_j \in \Omega$, the interpolation representation of equation (\ref{eq:mean}) will be the following:


\begin{equation*}
\begin{split}
\eset{\log(\E(\balpha, \blambda, \omega) + \delta)} &\approx \mathcal{R}(\eset{\log(\E(\balpha, \blambda, \omega) + \delta)})\\ 
&= \sum_{i=1}^{p} \mathbf{e}[i]m_{i}(\omega)
+ \sum\limits_{j=1}^{\eta} {\mathbf{v}[j] } \phi(\omega,\omega_j ).
\end{split}
\end{equation*}


We introduce a quadrature scheme that employs both polynomial and radial basis functions with a Gaussian measure (in preparation). Leveraging the symmetry of the Gaussian measure and the centrality of quadrature points, this approach offers high accuracy at each point—even in regions of low smoothness.

\section{Experiments and results}
We test the effects of solvent uncertainty on the predicted docking site
of bovine beta trypsin (chain E, Receptor) with the CMTI-I trypsin inhibitor (chain I, Ligand) 
from squash. Initially we run the rigid body docking code Piper \cite{Kozakov2010} and the
70,000 rotations $\balpha$ (and corresponding translations $\blambda$) are ordered from
lowest energy (best fit) to highest (worst fit). One hundred
realizations of the receptor are generated using  Scalable Molecular 
Dynamics NAMD \cite{Phillips2020} software. From these realizations an 
optimal truncated Karhunen-Lo\'{e}ve (KL) \cite{loeve1978,Harbrecht2016,Schwab2006}
stochastic model of the receptor domain is formed:
\begin{equation}
\mcA(\bx,\omega) 
\approx \eset{\mcA(\bx,\omega) } +
\sum_{n=1}^{N_{\mcA}} \sqrt{\lambda_{n}} \phi_{n}(\bx)
Z_{n}(\omega).
\label{setup:eqn1}
\end{equation}
It is assumed that
the random field of the receptor domain $\mcA(\bx,\omega)$ is a Gaussian process. 
The eigenfunctions $\{\phi_n(\bx)\}_{n=1}^{N_{\mcA}}$ 
can be estimated empirically using
the Method of Snapshots \cite{Castrillon2002}.
The eigenvalue coefficients $\lambda_{n} \in \R$ are monotonically decreasing
with respect to $n$ and $Z_n$ are independent zero mean, unit variance 
Normal random variables. For this experiment $N_{\mcA}$ is set to be 2.

In \lcol{Figure} \ref{preliminarynumericalresults:fig1}, the top line shows
$\eset{\log (\E(\balpha,\blambda,\omega)+ \delta)}$ while the bottom line represents
$\mbox{SD}{[\log (\E(\balpha,\blambda,\omega)]}$. The first 100 sorted rotations are plotted in 
\lcol{Figure} \ref{preliminarynumericalresults:fig1} (a) and sorted rotations from 69,050 to 60,150 in 
\lcol{Figure} \ref{preliminarynumericalresults:fig1} (b). Notice that the first 100 rotations generally exhibit
a lower mean and significantly smaller SD. The same pattern was observed 
for all the sorted rotations. Thus under receptor stochastic deformation the rigid body
docking site predictions have low uncertainty and are consistent with the actual docking site. Low uncertainty also
shows that this binding site is robust towards stochastic domain deformations of the receptor.

\begin{figure*}[htb]
  \centering
  \begin{tikzpicture}
  \begin{scope}[scale = 1, every node/.style={scale=1}]]
  \node at (0,0)  {\includegraphics[ width=2.5in, height=2.25in,trim=1.9cm 4.50cm 0.75cm 5cm, clip]{./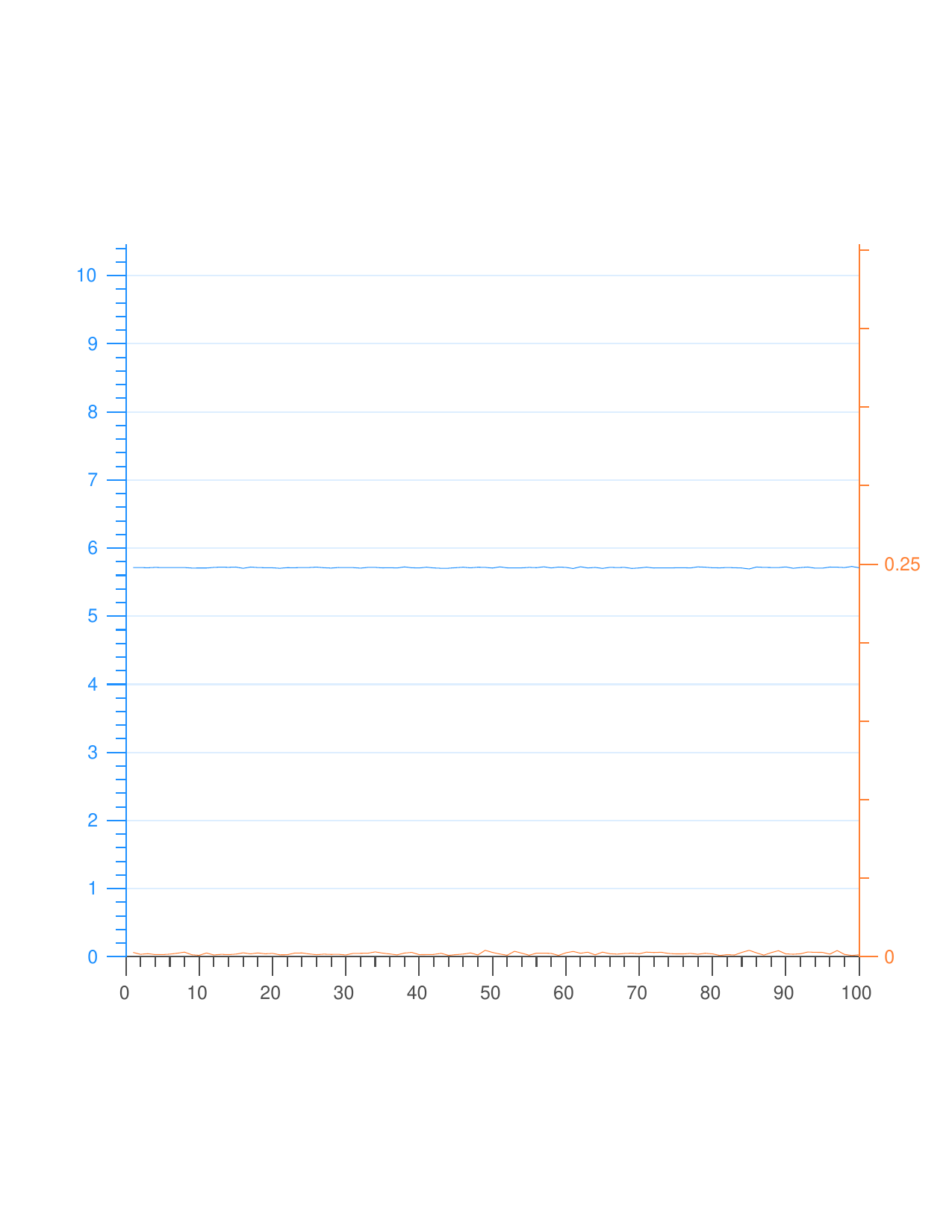}
                   \includegraphics[width=2.5in,height=2.25in,trim=1.9cm 4.50cm 0.75cm 5cm, clip]{./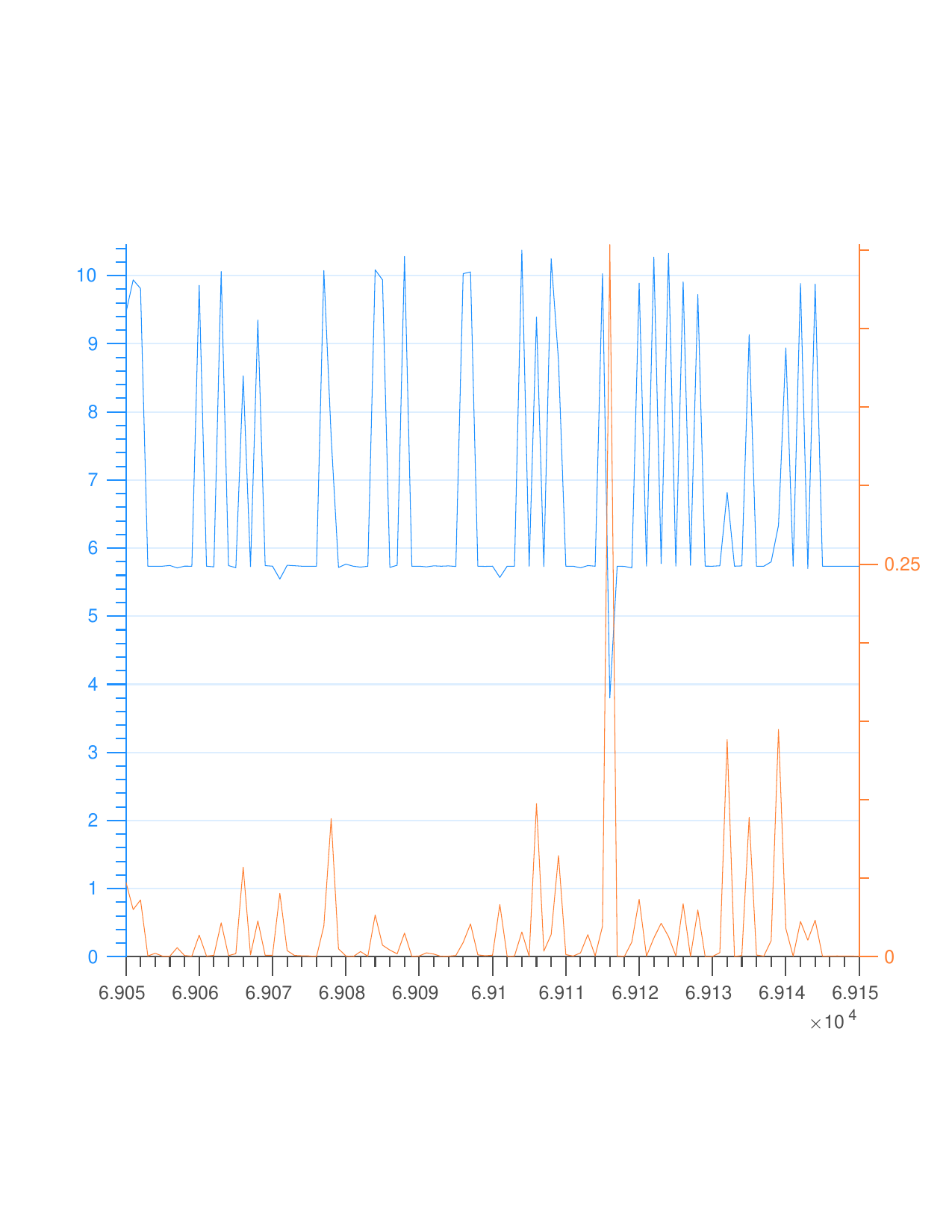} 
  };
  \node at (-5.65,2.75)  {(a)};
  \node at (0.75,2.75)  {(b)};

  
  \node at (-5,0.75) [right]  {$\eset{\log (\E(\balpha,\blambda,\omega)+ \delta)}$};
  \node at (-4.5,-3) [right]  {Rotation Number};
  
  \node at (0.5,-1.0) [right]  {$\mbox{SD}{[\log (\E(\balpha,\blambda,\omega)]}$};
  
  \end{scope}
\end{tikzpicture}
\caption{\corb{Trypsin energy manifold uncertainty measures with respect to 2-dimensional stochastic receptor deformations. 
The top line is the mean and the bottom the SD.  The corresponding scale axis are given to the left and the right.
(a) First 100 top sorted rotations. Note  flatness of the curve; in general mean and SD are more optimal.
(b) Sorted rotations from 69,050 to 69,150.}}
\label{preliminarynumericalresults:fig1}
\end{figure*} \par

When $N_{\mathcal{A}} = 3$, for the identical docking sites, the observations reveal that the predictions for rigid docking sites across the first 100 ranked rotations are significantly rougher and less stable than those derived from the two-dimensional model. This instability could be attributed to more intense collisions and fluctuations within the three-dimensional protein dynamic manifold.
To reduce noise, we scale the data after subtracting the mean. \lcol{Figure} \ref{fig:3dstats} shows that, after scaling,  the rigid docking
site predictions under 3-dimensional stochastic deformation
have low uncertainty, consistent with those observed in the case of two-dimensional stochastic deformation. However, as rotation number increases (\lcol{Figure} \ref{fig:3dstats} (c)), both the mean and standard deviation exhibit increased oscillation.  An alternative viable method could involve acquiring representations of receptors and ligands in three dimensions with minimized oscillation at the initial stage, as shown in \lcol{Figure}  \ref{pipeline}.  In a forthcoming paper, we plan to delve deeper into the predictions for rigid docking sites in three or more dimensions.
\begin{figure*}[htb]
  \centering
  \begin{tikzpicture}
  \begin{scope}[scale = 0.93, every node/.style={scale=0.93}]]
  \node at (0,0)  {\includegraphics[ width=2.5in, height=2.25in,trim=1.9cm 4.50cm 0.75cm 5cm, clip]{./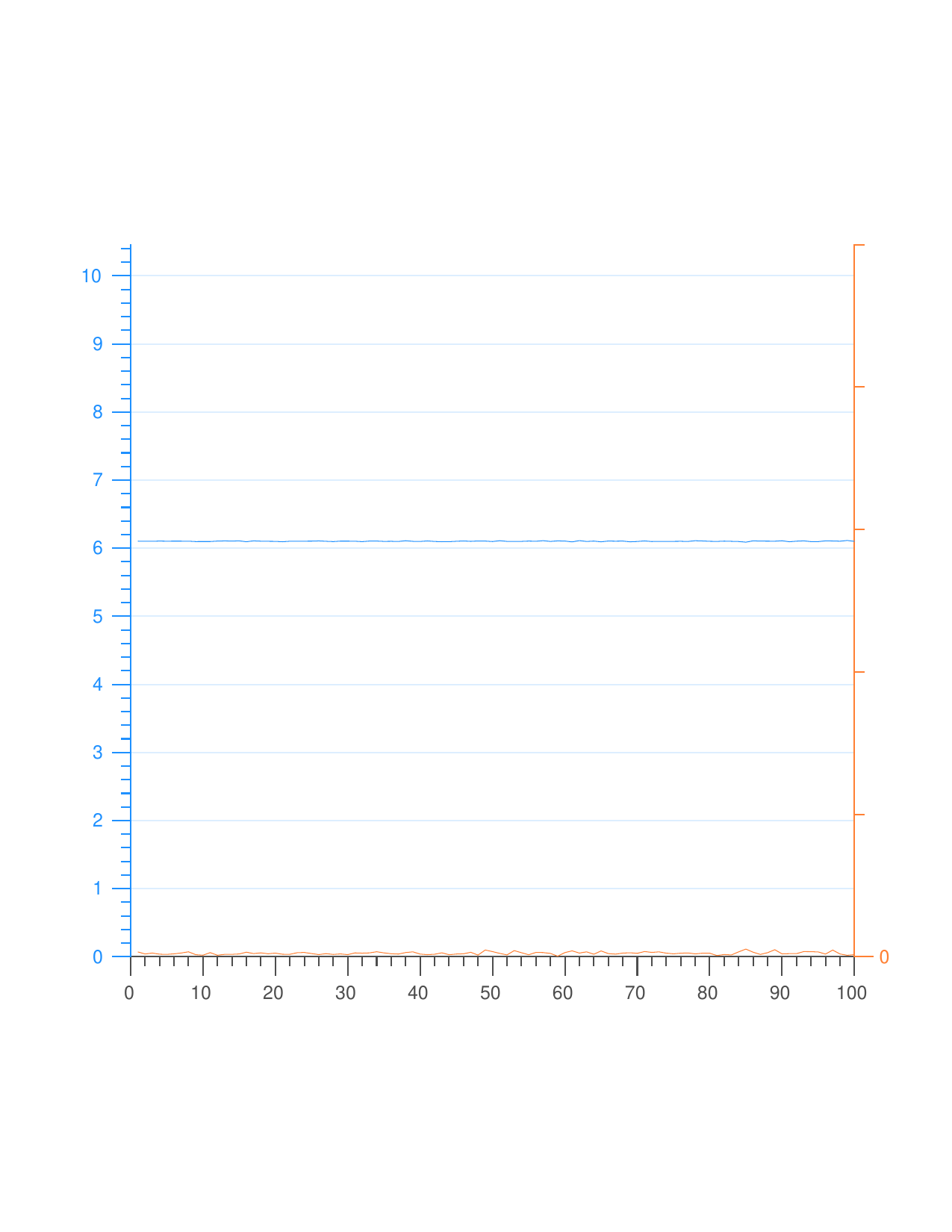}
                   \includegraphics[width=2.5in,height=2.25in,trim=1.9cm 4.50cm 0.75cm 5cm, clip]{./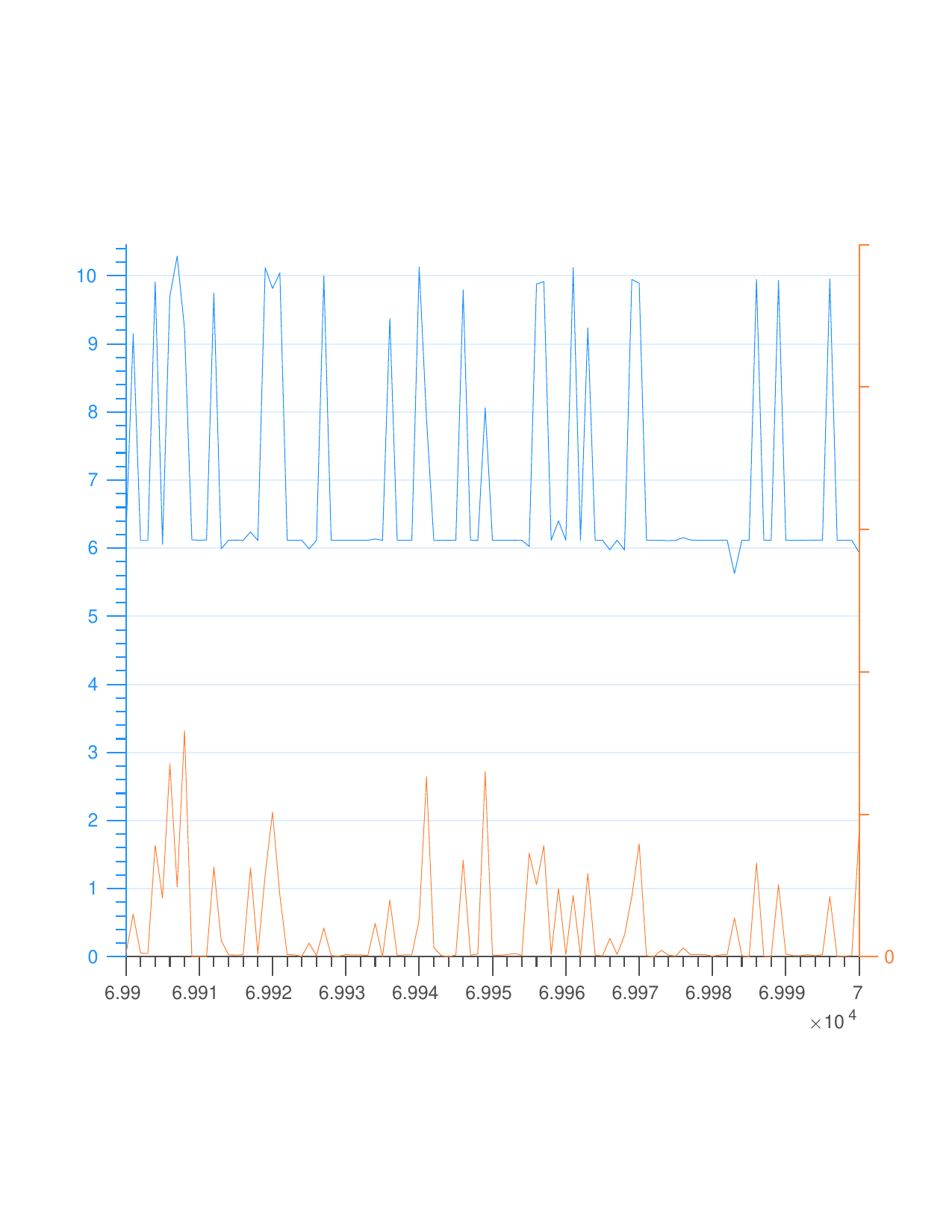}

  };

  \node at (9.7,0) {
                   \includegraphics[width=2.5in,height=2.25in,trim=1.9cm 4.50cm 0.75cm 5cm, clip]{./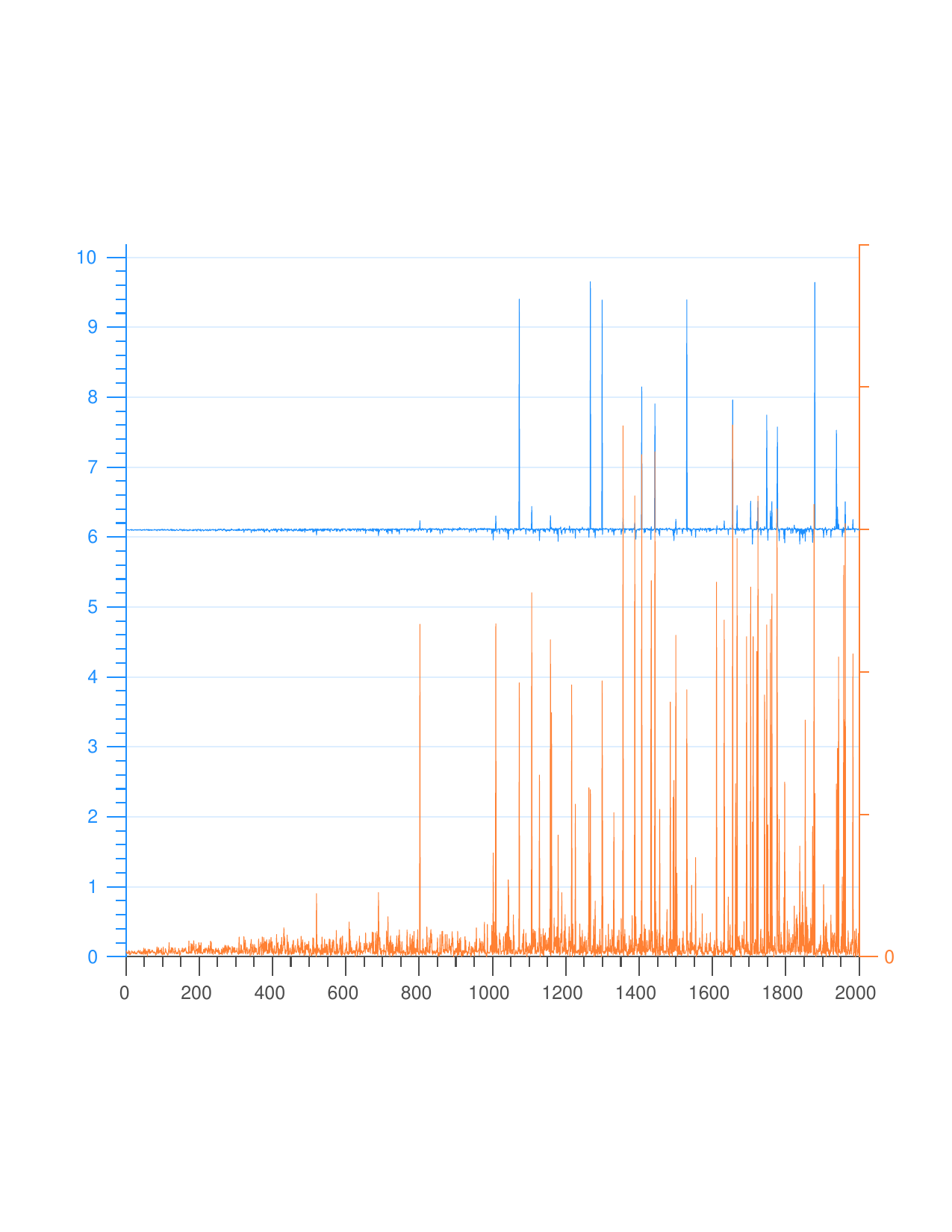}
                   };

  \node at (-5.65,2.75)  {(a)};
  \node at (0.75,2.75)  {(b)};
   \node at (-3,-3.5)  {(c)};
  
  \node at (-5,0.75) [right]  {$\eset{\log (\E(\balpha,\blambda,\omega)+ \delta)}$};
  \node at (-4.5,-3) [right]  {Rotation Number};
  \node at (2,-3) [right]  {Rotation Number};
  \node at (0.6,-0.5) [right]  {$\mbox{SD}{[\log (\E(\balpha,\blambda,\omega)]}$};
  
  \end{scope}
\end{tikzpicture}
    \caption{Trypsin energy manifold uncertainty measures with respect to 3-dimensional stochastic receptor deformations($N_{\mathcal{A}} = 3$), we multiply the random factor of KL expansion by 0.8 to reduce noise. 
The top line is the mean and the bottom the SD. The corresponding scales axis are given to the left and right.  (a) First 100 top sorted rotations. Same as 2 dimensions, the curve is very flat. (b) last 100 sorted rotations. (c) First 2000 top sorted rotations.
}
    \label{fig:3dstats}
\end{figure*}

\section{Discussion}
\label{Discussion}
As we have discussed in the previous sections, adopting a normality assumption changes the probability measure of the stochastic manifold without affecting the molecular manifold's covariance structure. Additionally, a key advantage of this model is its independence from specific ligand and receptor mappings, owing to its foundation on the molecular manifold's geometry.
The model's predictions for three-dimensional stochastic receptor deformations align with those of rigid body docking site analyses. 
The top one hundred rotations exhibit notably less uncertainty compared to others. Nonetheless, the model occasionally assigns higher rankings to rotations not identified as optimal. The accuracy of this framework is primarily affected by three types of errors: model error, truncation error, and integration error.
The stochastic receptor manifold is modeled as a realization of a Gaussian process. This assumption introduces a potential model error, given that such manifolds seldom exhibit Gaussian distribution properties in practice. Consequently,
computations based on the Gaussian probability measure may be incorrect.
Truncation error emerges from the dimensional reduction of the stochastic manifold to a three-dimensional space ($N_{\mathcal{A}} = 3$), capturing only approximately 32\% of the manifold's total dimensional scope.  To reduce truncation and integration errors, it is necessary to increase the number of dimensions.
However, computational expenses rise sharply for dimensions exceeding three. 
Comprehensive stochastic optimization becomes viable with the application of accurate numerical techniques for high-dimensional models, a topic designated for future research endeavors. 
Due to inaccuracies introduced by higher-dimensional analysis, pinpointing the exact optimal docking site is improbable. Nonetheless, the framework remains valuable for assessing the suitability of rotational and translational configurations as potential docking sites.



\begin{remark}
    An alternative approach of evaluating configuration uncertainty is to estimate the empirical distribution across the stochastic deformations, and then compute the probability $\bbP(\log(\E(\balpha, \blambda, \omega) + \delta)
    \leq C)$, for some parameter $C$. The distribution for $\log(\E(\balpha, \blambda, \omega) + \delta)$ can be computed by using the
    RBF representation as a surrogate model. This can be done by generating large amounts of realizations for 
    $Z_1(\omega), \dots, Z_{N}(\omega)$ and from the RBF surrogate model compute realizations of $\log(\E(\balpha, \blambda, \omega) + \delta))$ with reasonable computational burden. By using a histogram the probability can be estimated. This is left as future work.
\end{remark}

\ifCLASSOPTIONcompsoc
  \section*{Acknowledgments}
\else
  \section*{Acknowledgment}
\fi

  Research reported in this
technical report was supported in part by the National Institute of
General Medical Sciences (NIGMS) of the National Institutes of Health
under award number 1R01GM131409-03 and the National Science
Foundation under Grant No. 1736392.

\appendix
\setcounter{equation}{0}
\renewcommand{\theequation}{A.\arabic{equation}}
\label{Appendix}

We begin by presenting the Mercer's Theorem, which plays a crucial role in validating the subsequent theorems and propositions.

\begin{theorem}[Mercer's Theorem]Suppose $K$ is a symmetric positive definite kernel. Then there is a complete orthonormal basis $\{\phi_i\}_{i \in \mathbb{N}}^{\infty}$ of $L_2(D)$ consisting of eigenfunctions of $T_K :L^{2}(D)  \to L^{2}(D): f \to T_{K} = \int_{D} K(\bs,\cdot) f(\bs)ds$, such that the corresponding sequence of eigenvalues $\{\lambda_i\}_i$ is nonnegative. Then $K$ has the representation:
\begin{equation}\label{mercer}
    K(\bs,\bv) = \sum_{j = 1}^{\infty} \lambda_j \phi_j(\bs)\phi_j(\bv)^T.
\end{equation}

\end{theorem}

\begin{proof}
see page 4-6 in \cite{julio2022stochsatic}
\end{proof}

\subsection*{Proof of Multivariate Karhunen-Lo\`{e}ve Theorem (Theorem 1)}
The primary objective of this proof is to show that the truncated Karhunen-Lo\`{e}ve expansion converges to stochastic vector process $u(\bv, \omega)$ in $L_2$. \par
Let $Z_k(\omega) = \int_{D} u(\bv,\omega)^{T}\phi_k(\bv)\id\bv  $, we claim  that the $Z_k$ are uncorrelated with mean zero and variance $\lambda_k$. First, we have 

\begin{equation*}
    \begin{split}
    \eset{Z_k(\omega)} &=\eset{ \int_{D} u(\bv,\omega)\phi_k(\bv) \id \bv}\\ & =   \int_{D} \eset{u(\bv,\omega)}\phi_k(\bv)\id \bv\\ &= 0.   
    \end{split}
\end{equation*}
Furthermore, $\eset{Z_i(\omega) Z_j(\omega)}$
\[
\begin{split}
      &= \eset{ \int_{D}\int_{D} \phi_i(\bv)^{T} u(\bv,\omega) u(\bv',\omega)^{T}\phi_j(\bv')\id \bv \id \bv')} \\ 
      &=  \int_{D}\int_{D} \phi_i(\bv)^{T} \eset{(u(\bv,\omega) u(\bv',\omega)^{T})}\phi_j({\bv}')\id \bv \id \bv'      \\ &=
  \int_{D}\int_{D} \phi_i(\bv)^{T} R(\bv,\bv')\phi_j(\bv')\id \bv \id \bv'\\  &= \int_{D}\phi_i(\bv)^{T}\lambda_{j}\phi_j(\bv) \id\bv \\ &= \delta_{ij}\lambda_j.
\end{split}
  \]
 The equations above come from the fact that $\phi_k$'s are orthonormal eigenfunctions of the operator $T_{R_u}$. Then, by calculating the mean squared error of truncated Karhunen-Lo\`{e}ve expansion, we obtain an expression below with three terms:

\begin{align}
    &\eset{\ \left\|u(\mathbf{v},\omega)-\sum_{k=1}^{\ell}\left(u(\mathbf{v},\omega),\phi_k \right)_{L^2(D)} \phi_{k}\right\|_{L^2(D)}^{2} } \\
    &= \eset{ \left(u(\mathbf{v},\omega),u(\mathbf{v},\omega)\right)_{L^2(D)} }\label{subeq1} \\
    &\quad -2\eset{\left(u(\mathbf{v},\omega),\sum_{k=1}^{\ell}(u(\mathbf{v},\omega), \phi_{k})_{L^2(D)} \phi_{k} \right)_{L^2(D)}} \label{subeq2} \\
    &\quad +\mathbb{E}\Biggl[ \biggl(\sum_{k=1}^{\ell}(u(\mathbf{v},\omega), \phi_{k})_{L^2(D)} \phi_{k}\biggr)_{L^2(D)}, \nonumber \\
    &\quad \quad \biggl. \sum_{k=1}^{\ell}(u(\mathbf{v},\omega), \phi_{k})_{L^2(D)} \phi_{k} \biggr)_{L^2(D)}\Biggr],\label{subeq3} \end{align}

\par


\begin{align*}
  \eqref{subeq1}    &= \eset{ \left(\int_{D}u(\bv,\omega)^T u(\bv,\omega) \id \bv \right)} = Trace(R_u(\bv,\bv)), 
\end{align*}

\begin{align*}
        \eqref{subeq2}
        &= 
        -2\eset{\sum_{k=1}^{\ell}\int_{D}\phi_{k}(\bv)^T u(\bv,\omega)\left(  u(\bv,\omega), \phi_{k}\right)_{L^2(D)}    \id \bv } \\ 
     &= -2\sum_{k=1}^{\ell}\int_{D} \int_{D} \phi_{k}(\bv)^T
      R(\bv,\bv')\phi_{k}(\bv') \id \bv'  \id \bv\\
      &= -2\sum_{k=1}^{\ell}\int_{D} \phi_{k}(\bv)^T 
        \lambda_{k} \phi_{k}(\bv) \id \bv\\ & = -2\sum_{k=1}^{\ell} \lambda_k, \\
\end{align*}

        \begin{align*}
   \eqref{subeq3} &=
  \sum_{k=1}^{\ell}\int_{D}\int_{D}\phi_{k}(\bv)^T \eset{ u(\bv,\omega)u(\bv',\omega)^T  } \phi_{k}(\bv') \id \bv'\id \bv \\
   &= \sum_{k=1}^{\ell}\int_{D} 
        \lambda_{k}\phi_{k}(\bv)^T \phi_{k}(\bv) \id \bv \\
   &=  \sum_{k=1}^{\ell} \lambda_k.
\end{align*}

Then,
\begin{flalign}
    & \eset{ \left\|u(\mathbf{v})-\sum_{k=1}^{\ell}\left((u(\mathbf{v}), \phi_{k}\right)_{L^2(D)} \phi_{k}\right\|_{L^2(D)}^{2}} && \nonumber \\
    &= \eqref{subeq1} - \eqref{subeq2} + \eqref{subeq3} && \nonumber \\
    &= \text{Trace}\left(R_u(\mathbf{v},\mathbf{v})\right) - \sum_{k=1}^{\ell} \lambda_{k} && \nonumber \\
    &= \text{Trace}\left(R_u(\mathbf{v},\mathbf{v})\right) - \text{Trace}\left(\sum_{k=1}^{\ell} \lambda_k \phi_k \phi_k^T\right). &
    \label{eq:kl_final}
\end{flalign}

       The final term in equation \eqref{eq:kl_final} goes to zero as $\ell \rightarrow \infty$ by Mercer's theorem.
        
\subsection*{Proof of Proposition 1}

Suppose we have a centered Gaussian random field $u(\bv)= (u_1(\bv),u_2(\bv),...,u_n(\bv))^{T} \in L^2(\Omega)$, where $\Omega = [a_1,b_1] \times [a_2,b_2] \times ,...,\times [a_n,b_n] \in \mathbb{R}^{n}$,  each $[a_i,b_i]$ is a closed interval in $\mathbb{R}$ for all $1 \leq i \leq n$.
Let  $(.,.)_{L^2(\Omega)}$ be the inner product defined by $(f,g)_{ L^2(\Omega)}= \int_{\Omega} f^{T}g \id x$. \par

Define $$Z_{k}(\bv)= \int_{\Omega} u(\bv)^T\phi_{k}(\bv)dt=\int_{\Omega}\sum_{i=1}^{n} u_{i}(\bv) \phi_{k,i}(\bv) \id \bv. $$
Partition each interval $[a_i, b_i]$ into a finite family $I_i$ of $m_i$ non-overlapping closed subintervals. Consider the finite family of subrectangles, denoted by $C$, which is defined as the Cartesian product of intervals $C = I_1 \times I_2 \times \cdots \times I_n$. Here, we obtain a total of $m = \prod_{i=1}^{n} m_i$ smaller rectangles, each referred to as $C_i$. Consequently, the entire domain $\Omega$ can be expressed as the union of all these smaller rectangles, $\Omega = \bigcup_{i=1}^{m} C_i$.
Then $$Z_k(\bv) = \lim_{|\triangle| \to 0} \sum_{j = 1}^{m} \sum_{ i = 1}^{n} u_i(\bv_j)\phi_{k,i}(\bv_j)m(C_j)$$ where $\bv_j \in C_j$ for all $1\leq j \leq m$, $m(C_j)$ is the measure (area) of $C_j$, $\triangle$ is the largest partition cube, and $|\triangle|$ represents the area of that cube.

Define \(d_{i,j} = \phi_{k,i}(\mathbf{v}_j)m(C_j)\), and let 
\[
R^{k}(\Delta) = \sum_{j = 1}^{m} \sum_{i = 1}^{n} u_i(\mathbf{v}_j)\phi_{k,i}(\mathbf{v}_j)m(C_j) = \mathbf{D} \cdot \mathbf{U}^T,
\]
where
\[
\mathbf{D} = (d_{1,1}, d_{1,2}, \ldots, d_{1,m}, d_{2,1}, \ldots, d_{2,m}, \ldots, d_{n,1}, \ldots, d_{n,m})
\]
and the vector \(\mathbf{U}\) is constructed as follows:
\[
\begin{aligned}
\mathbf{U} = (&u_1(\mathbf{v}_1), u_1(\mathbf{v}_2), \ldots, u_1(\mathbf{v}_m), \\
&u_2(\mathbf{v}_1), u_2(\mathbf{v}_2), \ldots, u_2(\mathbf{v}_m), \\
&\ldots, u_n(\mathbf{v}_1), u_n(\mathbf{v}_2), \ldots, u_n(\mathbf{v}_m)).
\end{aligned}
\]

    Since $u(\bv)$ is a Gaussian random field,  $u_i(\bv_j)$ is also Gaussian. And since linear transformation preserves the Gaussian property of random variables, $R^{k}(\triangle)$ is also a zero mean Gaussian random variable. To compute its variance, we have

    \begin{equation*}
        \begin{split}
          \text{Var}(R^k(\triangle)) &= \eset{(R^k(\triangle))^2}\\ &= \eset{\sum_{j = 1}^{m}\phi^{k}(\bv_j)^T u u^{T} \phi^k(\bv_j)) m(C_j)^2}\\ &= \sum_{j = 1}^{m}\phi^{k}(\bv_j)^T\eset{ u u^{T}} \phi^k(\bv_j) m(C_j)^2\\  &= \sum_{j = 1}^{m}\phi^{k}(\bv_j)^T K(\bv_j,\bv_j) \phi^k(\bv_j) m(C_j)^2  
        \end{split}
    \end{equation*}
    and
\begin{equation*}
\begin{split}
        \lim_{|\triangle| \to 0 }\text{Var}(R^{k}(\triangle)) &= \int_{\Omega}\phi^{k}(\bv)^T K(\bv,\bv) \phi^k(\bv)\id \bv \\ &= \phi^k(\bv)^T\lambda_k \phi^k(\bv)\\ &= \lambda_k.
\end{split}
\end{equation*}

By computing the characteristic function of $R^k(\triangle)$ and taking a limit as $\left|\triangle \right| \to 0$, we get:
\begin{equation*}
    \begin{split}
       \lim_{|\triangle| \to 0 }\eset{e^{itR_{k}(\triangle)}} &= \lim_{|\triangle| \to 0 } e^{-\frac{1}{2} \text{Var}(R^k(\triangle))t^{2}}\\ &= e^{-\frac{1}{2}\lambda_k t^2}. 
    \end{split}
\end{equation*}
    Thus $Z_k \sim \mathcal{N}(0,\lambda_k)$.
\subsection*{Proof of Proposition 2}
\hfill \\
See Theorem 2.7 in \cite{Schwab2006}
\subsection*{Method of Snapshots}\label{snapshots}
To apply the Karhunen-Lo\`{e}ve expansion to the stochastic molecular domain, we need to solve the equation \eqref{Fredholm} to get eigenfunctions $\phi_{k}(\bv)$.
The Method of Snapshots, developed by by Sirovich, reduces this infinite-dimensional problem of computing the eigenfunctions $\phi_{k}(\bv)$ to an eigenvalue decomposition problem of a finite matrix by taking samples $u(\bv,t_k)$ at different discrete times $t_k$.
Assume that $u(\bv)$ is an ergodic process. that is, the time average is equal to the ensemble average, the correlation can be reduced to:

\begin{align*}R_u\left(\bs, \bv\right)=\lim_{M \to \infty}\frac{1}{M} \sum_{i=1}^{M} u^{i}(\bv) u^{i}\left(\bs\right)^{T}\end{align*}
where $u^{i}(\bv) = (u^{i}_{1}(\bv), u^{i}_{2}(\bv), u^{i}_{3}(\bv), ..., u^{i}_{l}(\bv))^{T}$ for $i = 1,2,...,M$.
With finite number of samples (snapshots),  $R_u(\bs,\bv)$ can be approximated as the following:
\begin{align}\label{finitesum}
    R_u\left(\bs, \bv\right)\approx \frac{1}{M} \sum_{i=1}^{M} u^{i}(\bv) u^{i}\left(\bs\right)^{T}
\end{align}

By substituting the equation \eqref{finitesum}, for each $k$, the equation \eqref{Fredholm}  becomes:
\begin{align*}
  \frac{1}{M} \sum_{i=1}^{M} u^{i}(\bv)\int_{D} u^{i}\left(\bs\right) ^{T} \tilde{\phi}_k\left(\bs\right) \id \bs=\lambda_k \tilde{\phi}_k(\bv).
\end{align*}\\
where $\tilde{\phi}_k$ is the corresponding eigenfunction to the approximation of covariance matrix function.

Define $a_{i,k} =\int_{D} u^{i}\left(\bs\right)^{T} \tilde{\phi}_k \left(\bs\right) d \bs $, then,
\begin{align}\label{simplified}
\frac{1}{M}  \sum_{i=1}^{M} a_{i,k}u^{i}(\bv) = \lambda_k \tilde{\phi}_k(\bv).\end{align}
 Multiplying both sides by $u^{j}(\bv)^{T}$ and integrating over $D$ gives:

 \[
 \begin{split}
     \frac{1}{M} \sum_{i=1}^{M} a_{i,k}\int_{D}u^{j}(\bv)^{T}u^{i}(\bv) \id\bv &= \lambda_k \int_{D} u^{j}(\bv)^{T}\tilde{\phi}_k(\bv) \id \bv. \\ \end{split}
 \]
 
     Define 
     \[
     \begin{split}
C_{ji}&=\frac{1}{M} \int_{D} u^{j}(\bv)^{T} u^{i}(\bv) \id \bv \\  &= \sum_{r = 1}^{l} \frac{1}{M} \int_{D} u^{j}_{r}(\bv)^{T} u^{i}_{r}(\bv) \id \bv \\
     \end{split}
     \]
      for $ i, j=1, \ldots, M$.\\ Then the equation \eqref{simplified} becomes:

$$
         \sum_{i=1}^{M} C_{j i} a_{i,k} = \lambda_k a_{j,k},
    $$
    that is
  \begin{equation}\label{eq:pod2}
  \left[\begin{array}{ccc}
C_{11} & \dots & C_{1 M} \\
\vdots & & \vdots \\
C_{M 1} & \cdots & C_{M M}
\end{array}\right]\left[\begin{array}{c}
a_{1,k} \\
\vdots \\
a_{M,k}
\end{array}\right]=\lambda_k\left[\begin{array}{c}
a_{1,k} \\
\vdots \\
a_{M,k}
\end{array}\right].\end{equation}
By now, we have transferred the problem in the equation \eqref{Fredholm} to the problem of finding eigenvalues and eigenvectors of an $M \times M$ matrix showed in the equation \eqref{eq:pod2}. Most of the time the latter is much easier to compute than the former since the number of samples $M$ is usually much smaller than the length of $u(\bv)$. And as observed, the approximation of orthonormal eigenfunctions are of the form:  
  \begin{align*}
 \tilde{\phi}_k(\bv)=\frac{1}{\lambda_k M} \sum_{i=1}^{M} a_{i,k}u^{i}(\bv).
  \end{align*}

\bibliographystyle{IEEEtran}
\bibliography{citations}
\begin{IEEEbiography}
   [{\includegraphics[width=1in,height=1.25in,clip,keepaspectratio]{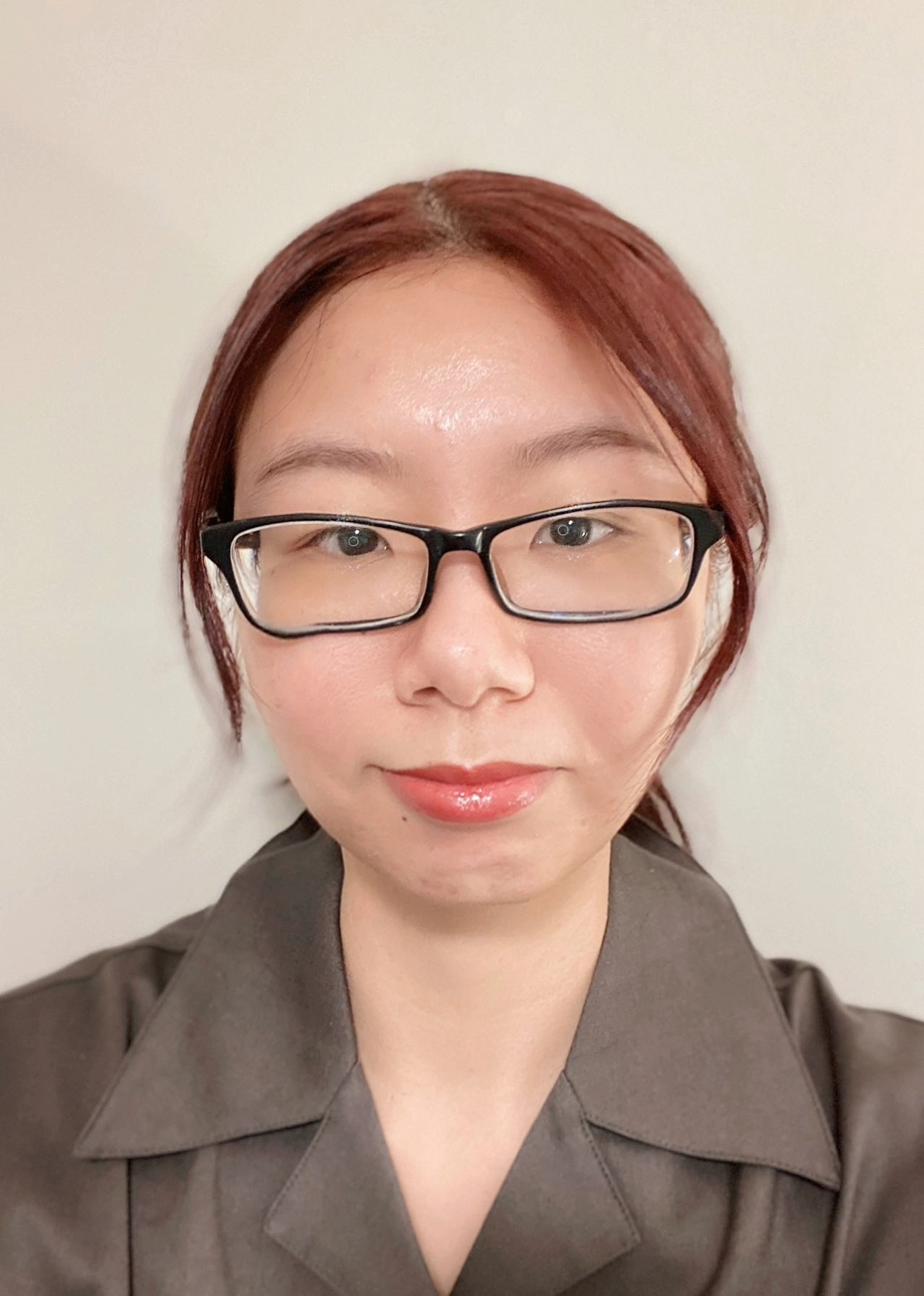}}] {Nanjie Chen} received her Bachelor's degree in Mathematics at Indiana University Bloomington in 2018. She is currently
working toward the Ph.D. degree in Applied Mathematics at Boston University from September
2020. 
\end{IEEEbiography}
\begin{IEEEbiography}
[{\includegraphics[width=1in,height=1.25in,clip,keepaspectratio]{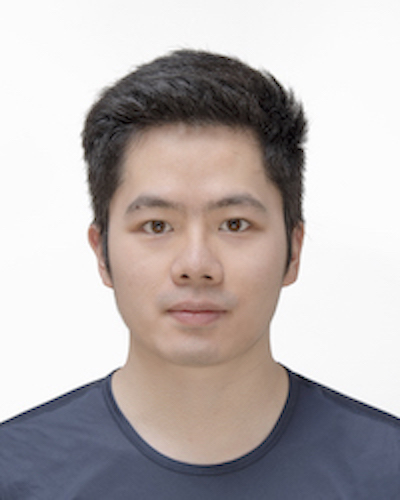}}]{Dongliang Yu} Dongliang Yu was born on April 1th, 1995, in China. He received his B.Sc. in Information and Computing Science from Zhongnan University of Economics and Law, China, in 2017. Pursuing further education, he completed his M.Sc. in Statistics at Department of Mathematics and Statistics, Boston University in 2020 and then went to Department of Mathematics, Stony Brook University to continue his study in field of Math. Dongliang’s research is primarily centered on numerical integral methods in high dimensions.
\end{IEEEbiography}
\begin{IEEEbiography}
[{\includegraphics[width=1in,height=1.25in,clip,keepaspectratio]{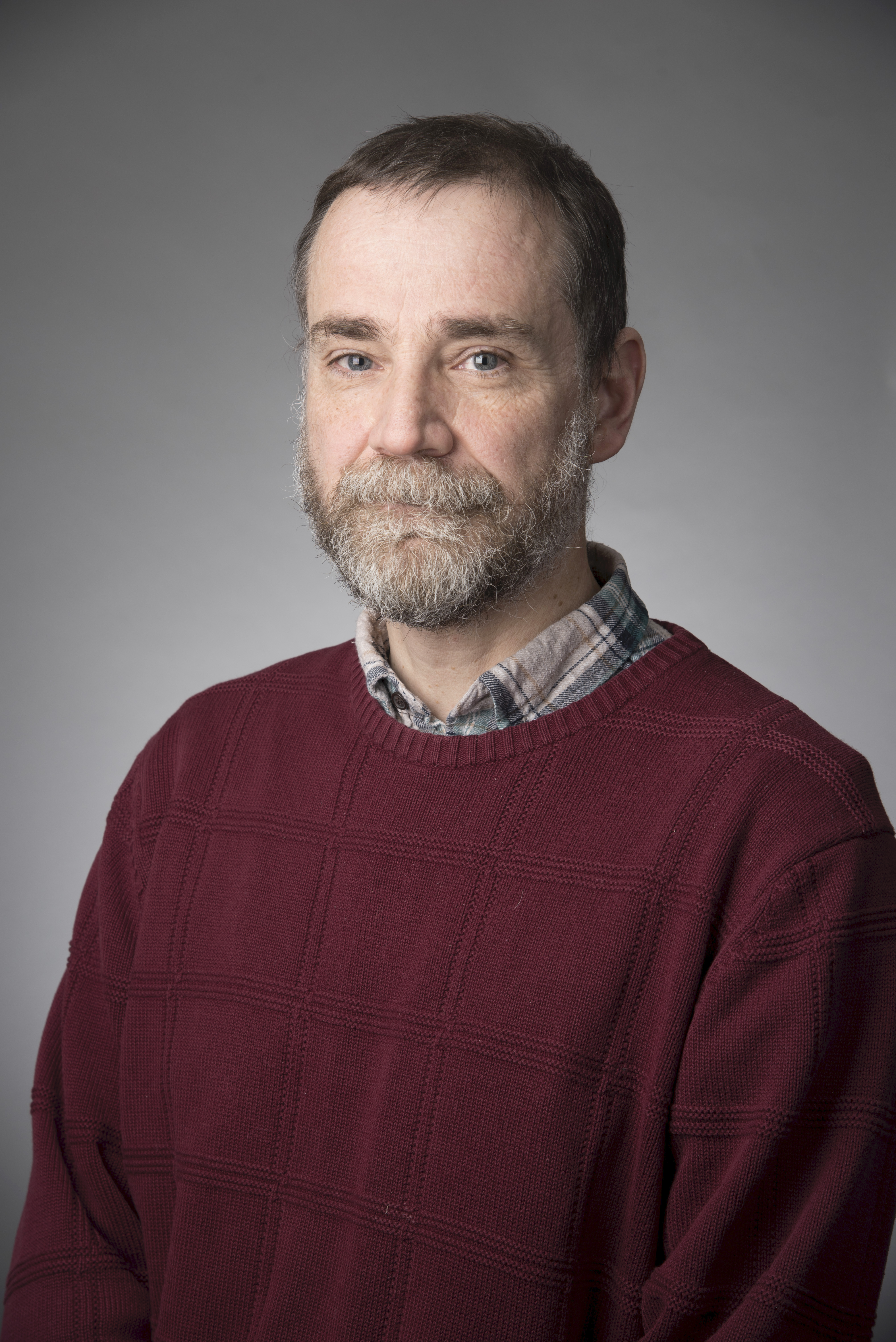}}]{Dmitri Beglov}received M.S. in Chemistry (1987) and Ph.D. in Biophysics (1991) from Moscow Institute of Physics and Technology, Moscow, Russia. From 1992 to 1998 he was a postdoc in Computational Chemistry in the Lab of Benoit Roux at the University of Montreal, Canada. From 1998 to 2006 he was working as a computational chemist holding industrial positions in several companies. He designed, developed and implemented various computational methods in macromolecular modeling programs such as CHARMM. From 2006 he is affiliated with the Lab of Sandor Vajda at Boston University, currently holding a position of Research Assistant Professor.  He is a developer and experienced user of protein docking programs Piper/ClusPro and computational solvent mapping program FTMap. In addition, since 2019 he is a CEO of Acpharis Inc., a US company, developing commercial software products and offering services for pharmaceutical and biotechnology clients. 
\end{IEEEbiography}
\begin{IEEEbiography}[{\includegraphics[width=1in,height=1.25in,clip,keepaspectratio]{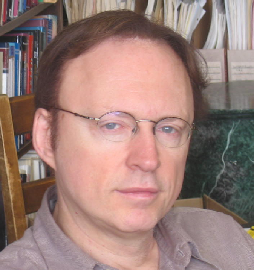}}]{Mark Kon} obtained Bachelor’s degrees in Mathematics,
Physics, and Psychology from Cornell
University, a PhD in Mathematics from MIT. He
is a professor of Mathematics and Statistics
at Boston University. He is affiliated with the
Quantum Information Group, the Bioinformatics
Program and the Computational Neuroscience
Program. He has had appointments at Columbia
University as Assistant and Associate Professor
(Computer Science, Mathematics), as well as at
Harvard and at MIT. He has published approximately
100 articles in mathematical physics, mathematics and statistics, 
computational biology, and computational neuroscience, including two
books. His recent research and applications interests involve quantum
probability and information, statistics, machine learning, computational
biology, computational neuroscience, and complexity.
\end{IEEEbiography} 
\begin{IEEEbiography}[{\includegraphics[width=1in,height=1.25in,clip,keepaspectratio]{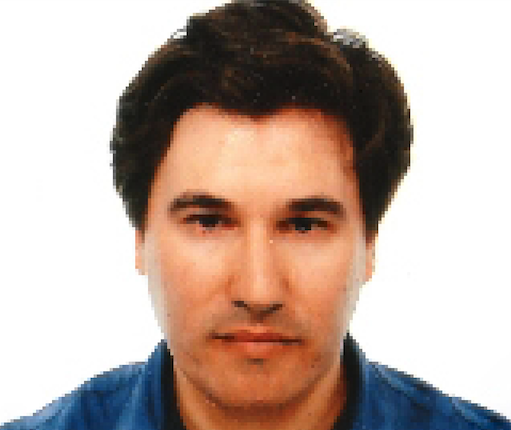}}]{Julio Enrique Castrill\'on-Cand\'as} received the MS and Ph.D. degrees in electrical engineering and computer science from the Massachusetts Institute of Technology (MIT), Cambridge. He is currently faculty in the department of Mathematics and Statistics at Boston University. His area of expertise is in Uncertainty Quantification (PDEs, non-linear stochastic networks), large scale computational statistics, functional data analysis and statistical machine learning.
 
\end{IEEEbiography}

\end{document}


\title{Supplement for Uncertainty quantification of receptor ligand binding sites prediction}

\author{Nanjie Chen, Dongliang Yu, Dmitri Beglov, Mark Kon, Julio Enrique Castrill\'on-Cand\'as 
\thanks{N.Chen is with Graduate School
of Mathematics and Statistics, Boston University, Boston, USA}
\thanks{D.Yu is with Graduate School
of Mathematics, Stony Brook University, Stony Brook, USA}
\thanks{D.Beglov is with the Faculty of Biomedical Engineering, Boston University, Boston, USA.}
\thanks{M.Kon is with the Faculty
 of Mathematics and Statistics, Boston University, Boston, USA}

\thanks{J.Castrill\'on-Cand\'as  is with the Faculty
of Mathematics and Statistics, Boston University, Boston, USA}
}

\maketitle

\setcounter{equation}{0}
\renewcommand{\theequation}{A.\arabic{equation}}
\label{Appendix}

We begin by presenting the Mercer's Theorem, which plays a crucial role in validating the subsequent theorems and propositions.

\begin{theorem}[Mercer's Theorem]Suppose $K$ is a symmetric positive definite kernel. Then there is a complete orthonormal basis $\{\phi_i\}_{i \in \mathbb{N}}^{\infty}$ of $L_2(D)$ consisting of eigenfunctions of $T_K :L^{2}(D)  \to L^{2}(D): f \to T_{K} = \int_{D} K(\bs,\cdot) f(\bs)ds$, such that the corresponding sequence of eigenvalues $\{\lambda_i\}_i$ is nonnegative. Then $K$ has the representation:
\begin{equation}\label{mercer}
    K(\bs,\bv) = \sum_{j = 1}^{\infty} \lambda_j \phi_j(\bs)\phi_j(\bv)^T.
\end{equation}

\end{theorem}

\begin{proof}
See page 4-6 in \cite{julio2022stochsatic}
\end{proof}

\subsection*{Proof of Multivariate Karhunen-Lo\`{e}ve Theorem (Theorem 1)}
The primary objective of this proof is to show that the truncated Karhunen-Lo\`{e}ve expansion converges to stochastic vector process $u(\bv, \omega)$ in $L_2$. \par
Let $Z_k(\omega) = \int_{D} u(\bv,\omega)^{T}\phi_k(\bv)\id\bv  $, we claim  that the $Z_k$ are uncorrelated with mean zero and variance $\lambda_k$. First, we have 

\begin{equation*}
    \begin{split}
    \eset{Z_k(\omega)} &=\eset{ \int_{D} u(\bv,\omega)\phi_k(\bv) \id \bv}\\ & =   \int_{D} \eset{u(\bv,\omega)}\phi_k(\bv)\id \bv\\ &= 0.   
    \end{split}
\end{equation*}
Furthermore, $\eset{Z_i(\omega) Z_j(\omega)}$
\[
\begin{split}
      &= \eset{ \int_{D}\int_{D} \phi_i(\bv)^{T} u(\bv,\omega) u(\bv',\omega)^{T}\phi_j(\bv')\id \bv \id \bv')} \\ 
      &=  \int_{D}\int_{D} \phi_i(\bv)^{T} \eset{(u(\bv,\omega) u(\bv',\omega)^{T})}\phi_j({\bv}')\id \bv \id \bv'      \\ &=
  \int_{D}\int_{D} \phi_i(\bv)^{T} R(\bv,\bv')\phi_j(\bv')\id \bv \id \bv'\\  &= \int_{D}\phi_i(\bv)^{T}\lambda_{j}\phi_j(\bv) \id\bv \\ &= \delta_{ij}\lambda_j.
\end{split}
  \]
 The equations above come from the fact that $\phi_k$'s are orthonormal eigenfunctions of the operator $T_{R_u}$. Then, by calculating the mean squared error of truncated Karhunen-Lo\`{e}ve expansion, we obtain an expression below with three terms:

\begin{align}
    &\eset{\ \left\|u(\mathbf{v},\omega)-\sum_{k=1}^{\ell}\left(u(\mathbf{v},\omega),\phi_k \right)_{L^2(D)} \phi_{k}\right\|_{L^2(D)}^{2} } \\
    &= \eset{ \left(u(\mathbf{v},\omega),u(\mathbf{v},\omega)\right)_{L^2(D)} }\label{subeq1} \\
    &\quad -2\eset{\left(u(\mathbf{v},\omega),\sum_{k=1}^{\ell}(u(\mathbf{v},\omega), \phi_{k})_{L^2(D)} \phi_{k} \right)_{L^2(D)}} \label{subeq2} \\
    &\quad +\mathbb{E}\Biggl[ \biggl(\sum_{k=1}^{\ell}(u(\mathbf{v},\omega), \phi_{k})_{L^2(D)} \phi_{k}\biggr)_{L^2(D)}, \nonumber \\
    &\quad \quad \biggl. \sum_{k=1}^{\ell}(u(\mathbf{v},\omega), \phi_{k})_{L^2(D)} \phi_{k} \biggr)_{L^2(D)}\Biggr],\label{subeq3} \end{align}

\par


\begin{align*}
  \eqref{subeq1}    &= \eset{ \left(\int_{D}u(\bv,\omega)^T u(\bv,\omega) \id \bv \right)} = Trace(R_u(\bv,\bv)), 
\end{align*}

\begin{align*}
        \eqref{subeq2}
        &= 
        -2\eset{\sum_{k=1}^{\ell}\int_{D}\phi_{k}(\bv)^T u(\bv,\omega)\left(  u(\bv,\omega), \phi_{k}\right)_{L^2(D)}    \id \bv } \\ 
     &= -2\sum_{k=1}^{\ell}\int_{D} \int_{D} \phi_{k}(\bv)^T
      R(\bv,\bv')\phi_{k}(\bv') \id \bv'  \id \bv\\
      &= -2\sum_{k=1}^{\ell}\int_{D} \phi_{k}(\bv)^T 
        \lambda_{k} \phi_{k}(\bv) \id \bv\\ & = -2\sum_{k=1}^{\ell} \lambda_k, \\
\end{align*}

        \begin{align*}
   \eqref{subeq3} &=
  \sum_{k=1}^{\ell}\int_{D}\int_{D}\phi_{k}(\bv)^T \eset{ u(\bv,\omega)u(\bv',\omega)^T  } \phi_{k}(\bv') \id \bv'\id \bv \\
   &= \sum_{k=1}^{\ell}\int_{D} 
        \lambda_{k}\phi_{k}(\bv)^T \phi_{k}(\bv) \id \bv \\
   &=  \sum_{k=1}^{\ell} \lambda_k.
\end{align*}

Then,
\begin{flalign}
    & \eset{ \left\|u(\mathbf{v})-\sum_{k=1}^{\ell}\left((u(\mathbf{v}), \phi_{k}\right)_{L^2(D)} \phi_{k}\right\|_{L^2(D)}^{2}} && \nonumber \\
    &= \eqref{subeq1} - \eqref{subeq2} + \eqref{subeq3} && \nonumber \\
    &= \text{Trace}\left(R_u(\mathbf{v},\mathbf{v})\right) - \sum_{k=1}^{\ell} \lambda_{k} && \nonumber \\
    &= \text{Trace}\left(R_u(\mathbf{v},\mathbf{v})\right) - \text{Trace}\left(\sum_{k=1}^{\ell} \lambda_k \phi_k \phi_k^T\right). &
    \label{eq:kl_final}
\end{flalign}

       The final term in equation \eqref{eq:kl_final} goes to zero as $\ell \rightarrow \infty$ by Mercer's theorem.
 \subsection*{Proof of Proposition 1}
\hfill \\
See page 4 in \cite{julio2022stochsatic} and Theorem 2.7 in \cite{Schwab2006}       
\subsection*{Proof of Proposition 2}

\begin{flalign*}
     &\operatorname{Cov}(\tilde{u}(\mathbf{x}, \omega), \tilde{u}(\mathbf{y}, \omega))\\ 
     &= \eset{\tilde{u}(\mathbf{x}, \omega)\tilde{u}(\mathbf{y}, \omega)^T} \\ &= \eset{\left(\sum_{k \in \mathbb{N}} \phi_k(\bx)\tilde{Z}_k(\omega)
\right)\left(\sum_{k \in \mathbb{N}} \phi_k(\by)\tilde{Z}_k(\omega)
\right)^T}\\  
 &= \sum_{k \in \mathbb{N}} \phi_k(\bx) \eset{\tilde{Z}_k^2(\omega)} \phi_k(\by)^T \\
 &=  \sum_{k \in \mathbb{N}} \lambda_k \phi_k(\bv)\phi_k(\by)^T
\end{flalign*}

\subsection*{Proof of Proposition 3}

Suppose we have a centered Gaussian random field $u(\bv)= (u_1(\bv),u_2(\bv),...,u_n(\bv))^{T} \in L^2(\Omega)$, where $\Omega = [a_1,b_1] \times [a_2,b_2] \times ,...,\times [a_n,b_n] \in \mathbb{R}^{n}$,  each $[a_i,b_i]$ is a closed interval in $\mathbb{R}$ for all $1 \leq i \leq n$.
Let  $(.,.)_{L^2(\Omega)}$ be the inner product defined by $(f,g)_{ L^2(\Omega)}= \int_{\Omega} f^{T}g \id x$. \par

Define $$Z_{k}(\bv)= \int_{\Omega} u(\bv)^T\phi_{k}(\bv)dt=\int_{\Omega}\sum_{i=1}^{n} u_{i}(\bv) \phi_{k,i}(\bv) \id \bv. $$
Partition each interval $[a_i, b_i]$ into a finite family $I_i$ of $m_i$ non-overlapping closed subintervals. Consider the finite family of subrectangles, denoted by $C$, which is defined as the Cartesian product of intervals $C = I_1 \times I_2 \times \cdots \times I_n$. Here, we obtain a total of $m = \prod_{i=1}^{n} m_i$ smaller rectangles, each referred to as $C_i$. Consequently, the entire domain $\Omega$ can be expressed as the union of all these smaller rectangles, $\Omega = \bigcup_{i=1}^{m} C_i$.
Then $$Z_k(\bv) = \lim_{|\triangle| \to 0} \sum_{j = 1}^{m} \sum_{ i = 1}^{n} u_i(\bv_j)\phi_{k,i}(\bv_j)m(C_j)$$ where $\bv_j \in C_j$ for all $1\leq j \leq m$, $m(C_j)$ is the measure (area) of $C_j$, $\triangle$ is the largest partition cube, and $|\triangle|$ represents the area of that cube.

Define \(d_{i,j} = \phi_{k,i}(\mathbf{v}_j)m(C_j)\), and let 
\[
R^{k}(\Delta) = \sum_{j = 1}^{m} \sum_{i = 1}^{n} u_i(\mathbf{v}_j)\phi_{k,i}(\mathbf{v}_j)m(C_j) = \mathbf{D} \cdot \mathbf{U}^T,
\]
where
\[
\mathbf{D} = (d_{1,1}, d_{1,2}, \ldots, d_{1,m}, d_{2,1}, \ldots, d_{2,m}, \ldots, d_{n,1}, \ldots, d_{n,m})
\]
and the vector \(\mathbf{U}\) is constructed as follows:
\[
\begin{aligned}
\mathbf{U} = (&u_1(\mathbf{v}_1), u_1(\mathbf{v}_2), \ldots, u_1(\mathbf{v}_m), \\
&u_2(\mathbf{v}_1), u_2(\mathbf{v}_2), \ldots, u_2(\mathbf{v}_m), \\
&\ldots, u_n(\mathbf{v}_1), u_n(\mathbf{v}_2), \ldots, u_n(\mathbf{v}_m)).
\end{aligned}
\]

    Since $u(\bv)$ is a Gaussian random field,  $u_i(\bv_j)$ is also Gaussian. And since linear transformation preserves the Gaussian property of random variables, $R^{k}(\triangle)$ is also a zero mean Gaussian random variable. To compute its variance, we have

    \begin{equation*}
        \begin{split}
          \text{Var}(R^k(\triangle)) &= \eset{(R^k(\triangle))^2}\\ &= \eset{\sum_{j = 1}^{m}\phi^{k}(\bv_j)^T u u^{T} \phi^k(\bv_j)) m(C_j)^2}\\ &= \sum_{j = 1}^{m}\phi^{k}(\bv_j)^T\eset{ u u^{T}} \phi^k(\bv_j) m(C_j)^2\\  &= \sum_{j = 1}^{m}\phi^{k}(\bv_j)^T K(\bv_j,\bv_j) \phi^k(\bv_j) m(C_j)^2  
        \end{split}
    \end{equation*}
    and
\begin{equation*}
\begin{split}
        \lim_{|\triangle| \to 0 }\text{Var}(R^{k}(\triangle)) &= \int_{\Omega}\phi^{k}(\bv)^T K(\bv,\bv) \phi^k(\bv)\id \bv \\ &= \phi^k(\bv)^T\lambda_k \phi^k(\bv)\\ &= \lambda_k.
\end{split}
\end{equation*}

By computing the characteristic function of $R^k(\triangle)$ and taking a limit as $\left|\triangle \right| \to 0$, we get:
\begin{equation*}
    \begin{split}
       \lim_{|\triangle| \to 0 }\eset{e^{itR_{k}(\triangle)}} &= \lim_{|\triangle| \to 0 } e^{-\frac{1}{2} \text{Var}(R^k(\triangle))t^{2}}\\ &= e^{-\frac{1}{2}\lambda_k t^2}. 
    \end{split}
\end{equation*}
    Thus $Z_k \sim \mathcal{N}(0,\lambda_k)$.

\subsection*{Method of Snapshots}
\label{snapshots}
To apply the Karhunen-Lo\`{e}ve expansion to the stochastic molecular domain, we need to solve the equation \eqref{Fredholm} of the paper to get eigenfunctions $\phi_{k}(\bv)$.
The Method of Snapshots reduces this infinite-dimensional problem of computing the eigenfunctions $\phi_{k}(\bv)$ to an eigenvalue decomposition problem of a finite matrix by taking samples $u(\bv,t_k)$ at different discrete times $t_k$ \cite{SmithSnapshots2005}.
Assume that $u(\bv)$ is an ergodic process. that is, the time average is equal to the ensemble average, the correlation can be reduced to:

\begin{align*}R_u\left(\bs, \bv\right)=\lim_{M \to \infty}\frac{1}{M} \sum_{i=1}^{M} u^{i}(\bv) u^{i}\left(\bs\right)^{T}\end{align*}
where $u^{i}(\bv) = (u^{i}_{1}(\bv), u^{i}_{2}(\bv), u^{i}_{3}(\bv), ..., u^{i}_{l}(\bv))^{T}$ for $i = 1,2,...,M$.
With finite number of samples (snapshots),  $R_u(\bs,\bv)$ can be approximated as the following:
\begin{align}\label{finitesum}
    R_u\left(\bs, \bv\right)\approx \frac{1}{M} \sum_{i=1}^{M} u^{i}(\bv) u^{i}\left(\bs\right)^{T}
\end{align}

By substituting the equation \eqref{finitesum}, for each $k$, the equation \eqref{Fredholm} of the paper becomes:
\begin{align*}
  \frac{1}{M} \sum_{i=1}^{M} u^{i}(\bv)\int_{D} u^{i}\left(\bs\right) ^{T} \tilde{\phi}_k\left(\bs\right) \id \bs=\lambda_k \tilde{\phi}_k(\bv).
\end{align*}\\
where $\tilde{\phi}_k$ is the corresponding eigenfunction to the approximation of covariance matrix function.

Define $a_{i,k} =\int_{D} u^{i}\left(\bs\right)^{T} \tilde{\phi}_k \left(\bs\right) d \bs $, then,
\begin{align}\label{simplified}
\frac{1}{M}  \sum_{i=1}^{M} a_{i,k}u^{i}(\bv) = \lambda_k \tilde{\phi}_k(\bv).\end{align}
 Multiplying both sides by $u^{j}(\bv)^{T}$ and integrating over $D$ gives:

 \[
 \begin{split}
     \frac{1}{M} \sum_{i=1}^{M} a_{i,k}\int_{D}u^{j}(\bv)^{T}u^{i}(\bv) \id\bv &= \lambda_k \int_{D} u^{j}(\bv)^{T}\tilde{\phi}_k(\bv) \id \bv. \\ \end{split}
 \]
 
     Define 
     \[
     \begin{split}
C_{ji}&=\frac{1}{M} \int_{D} u^{j}(\bv)^{T} u^{i}(\bv) \id \bv \\  &= \sum_{r = 1}^{l} \frac{1}{M} \int_{D} u^{j}_{r}(\bv)^{T} u^{i}_{r}(\bv) \id \bv \\
     \end{split}
     \]
      for $ i, j=1, \ldots, M$.\\ Then the equation \eqref{simplified} becomes:

$$
         \sum_{i=1}^{M} C_{j i} a_{i,k} = \lambda_k a_{j,k},
    $$
    that is
  \begin{equation}\label{eq:pod2}
  \left[\begin{array}{ccc}
C_{11} & \dots & C_{1 M} \\
\vdots & & \vdots \\
C_{M 1} & \cdots & C_{M M}
\end{array}\right]\left[\begin{array}{c}
a_{1,k} \\
\vdots \\
a_{M,k}
\end{array}\right]=\lambda_k\left[\begin{array}{c}
a_{1,k} \\
\vdots \\
a_{M,k}
\end{array}\right].\end{equation}
By now, we have transferred the problem in the equation \eqref{Fredholm} from the paper to the problem of finding eigenvalues and eigenvectors of an $M \times M$ matrix showed in the equation \eqref{eq:pod2}. Most of the time the latter is much easier to compute than the former since the number of samples $M$ is usually much smaller than the length of $u(\bv)$. And as observed, the approximation of orthonormal eigenfunctions are of the form:  
  \begin{align*}
 \tilde{\phi}_k(\bv)=\frac{1}{\lambda_k M} \sum_{i=1}^{M} a_{i,k}u^{i}(\bv).
  \end{align*}
  \bibliographystyle{IEEEtran}
\bibliography{citations}